\documentclass[
  pra,
  onecolumn,
  amsmath,amssymb,
  aps,
  superscriptaddress
]{revtex4-2}

\usepackage{xcolor}
\usepackage[left=2cm,right=2cm,top=20mm,bottom=20mm]{geometry}
\usepackage{graphicx}
\usepackage{enumitem}
\usepackage{hyperref}
\usepackage{amsthm}

\theoremstyle{plain}
\newtheorem{prop}{Proposition}

\begin{document}
\title{Multi-Yoked Surface Codes}

\author{Yuga Hirai}
\email{yuga917@keio.jp}
\affiliation{Department of Applied Physics and Physico-Informatics, Keio University, Kanagawa, Japan}
\affiliation{Center for Quantum Computing, RIKEN, Saitama, Japan}

\author{Yasunari Suzuki}
\email{yasunari.suzuki@riken.jp}
\affiliation{Center for Quantum Computing, RIKEN, Saitama, Japan}

\date{\today}

\begin{abstract}
  The surface code is the leading candidate for fault-tolerant quantum computation owing to its planar nearest-neighbor connectivity, high threshold, and efficient decoding, but its low encoding rate demands on the order of a thousand physical qubits per logical qubit in the teraquop regime. Yoked surface codes improve the density by a factor of two or three by concatenating a high-density parity check code on top of the surface code while retaining planar nearest-neighbor connectivity. In this work, we generalize this idea to multiple levels of concatenation and introduce \textit{multi-yoked surface codes}. Our construction rests on two new ingredients. First, we propose a family of outer codes called \textit{chain codes}, built by chaining copies of a \textit{seed code} that detects the block errors of the code one level below; the resulting codes use all logical qubits of the lower level and achieve a high encoding rate within a few concatenation levels. Second, we design a patch allocation in which movable workspaces perform the yoking checks by lattice surgery with a hierarchy of yoking intervals, so that lower-level checks are performed frequently while higher-level checks are performed less often, which avoids the long yoking intervals that a single large-distance outer code would require. As a benchmark, we construct a three-yoked surface code with a $[[256,166,8]]$ outer code and evaluate it numerically under the SI1000 noise model using soft information from the surface code decoder. The logical error rate is suppressed exponentially in the inner code distance, and in the teraquop regime the three-yoked surface code is about 4 to 4.5$\times$ denser than the unyoked surface code.
\end{abstract}

\maketitle

\section{Introduction}
Quantum computers require error correction during computation as physical qubits are highly susceptible to bit-flip and phase-flip errors. 
The surface code is considered one of the most promising quantum error-correcting codes~\cite{kitaev2003fault,dennis2002topological,fowler2012surface,litinski2019game} because it has several preferable properties: it can be implemented with only nearest-neighbor planar connectivity on a square lattice~\cite{google2023suppressing,google2025quantum}, it has a high threshold of $\sim\! 0.01$~\cite{fowler2009high,wang2009threshold,fowler2012proof}, and its simple structure enables efficient decoding with the minimum-weight perfect matching (MWPM) algorithm~\cite{dennis2002topological,higgott2022pymatching,higgott2023improved,fowler2013optimal} and computation of soft information~\cite{dennis2002topological,gidney2025yoked,meister2024efficient}. 
The surface code has already been demonstrated in real devices, and logical error suppression under realistic device noise has also been experimentally confirmed~\cite{google2023suppressing,eickbusch2025demonstration}.
Despite these advantages, the surface code has a serious drawback: a low encoding rate. Executing a quantum algorithm in the teraquop regime requires a logical error rate per surface-code patch per round below $10^{-15}$~\cite{gidney2025factor}. Assuming $10^{-3}$ physical error rates, this means that a single surface-code patch requires on the order of a thousand physical qubits, including measurement qubits. Therefore, proposing high-density memory space with reasonable hardware requirements is essential for realizing fault-tolerant quantum computation.

There have been many studies aiming to improve the density of memory space, such as quantum low-density parity-check codes and concatenated codes. However, almost all of them impose additional hardware requirements, such as non-nearest-neighbor interactions, high-degree connectivity, deeper syndrome-extraction circuits, or more-than-two-dimensional embeddings~\cite{pattison2025hierarchical,bravyi2024high,higgott2024constructions,steffan2025tile,geher2025directional,nixon2026vine}. These requirements might incur penalties in physical error rates, device scalability, and decoding performance. Thus, methods to improve the density without additional requirements have been lacking.

Recently, yoked surface codes~\cite{gidney2025yoked} have successfully improved density by 2 to 3$\times$ without imposing any additional requirements described above. Ref.~\cite{low2026denser} also improves the density by using a sophisticated arrangement of twist defects. A key idea in their improvements is to concatenate a single high-density parity-check code, such as an $[[n,n-2,2]]$ iceberg code, over the surface-code qubits. In traditional constructions of general concatenated codes, many ancillary code blocks must be allocated for fault-tolerant syndrome extraction, and this overhead undermines the advantage of concatenated codes in meaningful regimes. The yoked surface code eliminates this overhead by performing syndrome extraction of the concatenated code with a carefully designed lattice-surgery sequence with a small workspace.

In this work, we propose \textit{multi-yoked surface codes}, which achieve higher density by concatenating multiple high-density parity-check codes over the surface codes. Specifically, we find that a three-level concatenation, i.e., the yoked-yoked-yoked surface codes illustrated in Fig.~\ref{fig:yoked_yoked_yoked_surface_code}, yields codes 4 to 4.5$\times$ denser than conventional surface codes at a logical error rate below $10^{-15}$.
It is well known that multi-level concatenation significantly improves the encoding rate~\cite{yamasaki2024time,yoshida2025concatenate}, but designing a high-density implementation is nontrivial for the following two reasons. 
First, the multiple concatenations of high-density parity-check codes introduce complicated correlated errors in lower-level code blocks. Therefore, we need to provide a series of high-density parity-check codes that can detect problematic correlated errors in inner codes. Second, we need additional workspace for the syndrome extraction of outer codes. This overhead might cancel the density improvement by multi-level concatenation. Thus, we need to design efficient workspace allocations and syndrome-extraction schedules for outer codes. This work addresses both problems.

\begin{figure}[b]
  \centering
  \includegraphics[width=1\textwidth]{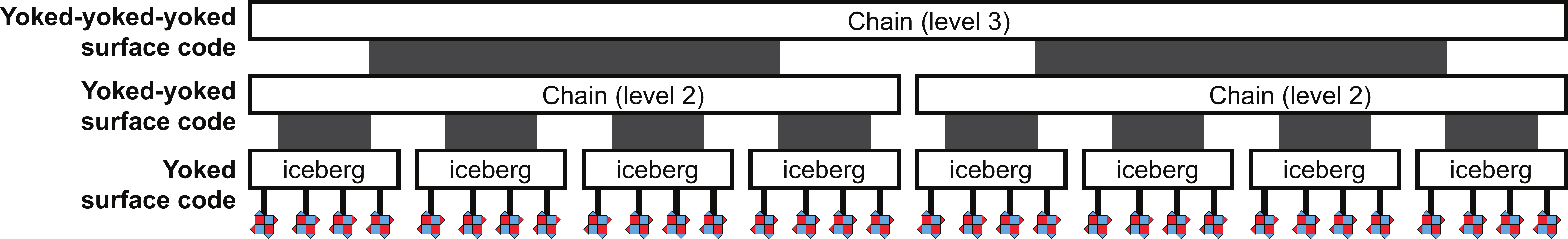}
  \caption{Concept of multi-yoked surface codes.}
  \label{fig:yoked_yoked_yoked_surface_code}
\end{figure}

We resolve the first problem by proposing \textit{chain codes}, which generalize the mechanism of the $C_4/C_6$ code~\cite{knill1996concatenated}. Here, the $C_4$ code is a $[[4,2,2]]$ iceberg code, and the $C_6$ code is a $[[6,2,2]]$ non-iceberg code. Although two logical qubits encoded by the $C_4$ or $C_6$ code might suffer from correlated errors, the $C_6$ code can be concatenated over three $C_4$ or $C_6$ code blocks because the $C_6$ code is capable of detecting these correlated errors. 
In the multi-yoked surface codes, we use surface-code qubits instead of physical qubits, and use general $[[n,n-2,2]]$ iceberg codes instead of $C_4$ codes. To detect correlated errors in the iceberg codes and their underlying inner codes, we propose a framework named \textit{chain codes}, which can detect all correlated errors in each lower-level code block as the $C_6$ code does.
When we want to define chain codes for multiple blocks of a given inner code, we first find a small \textit{seed code} that can detect all problematic errors in one code block. Then, we construct the chain code by chaining copies of the seed code as shown in Fig.~\ref{fig:chain_codes}. We consider up to three-level concatenation for a benchmark in this work; this yields an outer code $[[128n_2,104n_2-42,8]]$, whose encoding rate $k/n$ is $\sim0.65$ for $n_2 = 2$ and $\sim0.80$ for $n_2 = 20$.

For the latter problem, we present an efficient memory-space implementation of the multi-yoked surface code. We separate the whole layout into several modules, allocate workspaces for each high-density parity-check code in the outer code, and move them to perform syndrome extractions. An example layout is shown in Fig.\,\ref{fig:patch_alloc_employed}.
This layout not only enables parity checks for outer codes but also allows fine-tuning of the trade-off between space overhead and the \textit{yoking interval}, that is, the time between two consecutive yoking operations. 
In the original yoked surface code, a space-efficient implementation with a single hallway shown in Fig.~\ref{fig:yoking_config} was adopted for yoking to minimize the space overhead. When the outer code distance becomes larger, however, this design easily lengthens the yoking interval, which in turn requires larger surface codes to tolerate idling errors. Thus, it may achieve high density only in a regime of logical error rates far lower than those required in practice. In our layout, we adopt both time-efficient and space-efficient yoking designs to balance the frequencies of the checks of each level and the number of workspaces. This allows us to use the qubits spent on excess suppression to make the code denser. Note that hook errors in physical syndrome-extraction circuits for this procedure can be systematically eliminated without losing the fault distance by using the technique from our previous work~\cite{hirai2026no}.

We numerically simulated the performance of the multi-yoked surface code with a gap simulation similar to Ref.~\cite{gidney2025yoked}, and optimized parameters for target logical error rates. We find that, with an inner code distance of $d_\text{inner} = 10$, it is about 4 to 4.5$\times$ denser than conventional surface codes under the SI1000 noise model, which corresponds to a 1.3 to 1.5$\times$ density improvement over the yoked surface code. We note that the simulated three-yoked surface code is only a prototype; a more elaborate design, or concatenation beyond three levels, may yield an even higher density.

Finally, we note a drawback of our multi-yoked surface codes compared to the original yoked surface code, although we believe it can be resolved. In the yoked surface codes, both inner and outer codes are matchable, and they can be simultaneously decoded with fast MWPM decoders. In our case, while level-0 and level-1 codes (i.e., surface codes and iceberg codes) are matchable, the outer codes (i.e., chain codes) are not. Therefore, we need to decode the inner parts with MWPM decoders, and then decode outer codes with other methods. 
We believe, however, that this does not pose a serious problem in the whole architecture. This is because the outer code is small and its syndrome-extraction frequency is much lower than that of the inner codes. Thus, we believe its real-time decoding would be feasible, for example, with beam-search-based decoders~\cite{beni2025tesseract}. Architecting such a two-stage decoding in realistic systems goes beyond our scope and is left as future work. 

The rest of this paper is organized as follows. In Section~\ref{sec:preliminary}, we review the basics of surface codes and concatenated codes. In Section~\ref{sec:existing_work}, we introduce the previous work on yoked surface codes and explain what is new in this work. In Section~\ref{sec:proposed_method}, we present the construction of multi-yoked surface codes, including chain codes and the patch allocation. In Section~\ref{sec:numerical_results}, we show the simulation results of multi-yoked surface codes. Finally, in Section~\ref{sec:discussion}, we conclude this paper and discuss open questions.

\section{Preliminary}
\label{sec:preliminary}
Multi-yoked surface codes are based on a concatenated code whose level-0 code is the surface code. In this section, we review the basics of surface codes and concatenated codes, both of which are needed in what follows.

\subsection{Surface Codes}
The surface code~\cite{kitaev2003fault,dennis2002topological,fowler2012surface,litinski2019game} is a stabilizer code defined on a square lattice, where the data qubits sit on the vertices, and both the Z and the X stabilizers sit on the plaquettes in the checkerboard pattern shown on the left in Fig.~\ref{fig:surface_code}. Each plaquette has its own measurement qubit, which extracts the parity of the neighboring data qubits. We call a boundary terminated by X (Z) stabilizers an \textit{X boundary} (\textit{Z boundary}), and an X (Z) logical operator is defined as a chain of X (Z) operators on the data qubits connecting the two X (Z) boundaries. A single patch therefore encodes one logical qubit. The surface code has several attractive properties. First, it requires only planar, nearest-neighbor connectivity on a square lattice, which relaxes the demands on the device. Second, it has a high threshold of $\sim\!0.01$. Third, it can be decoded relatively easily by the minimum-weight perfect matching~(MWPM) algorithm~\cite{higgott2022pymatching}, since every error in the decoding graph can be represented by an edge; a hyperedge corresponding to a Y error is decomposed into the edges corresponding to X and Z errors without significantly degrading the decoding accuracy. We call stabilizer codes with this property \textit{matchable}.

\begin{figure}[h]
  \centering
  \includegraphics[width=0.30\textwidth]{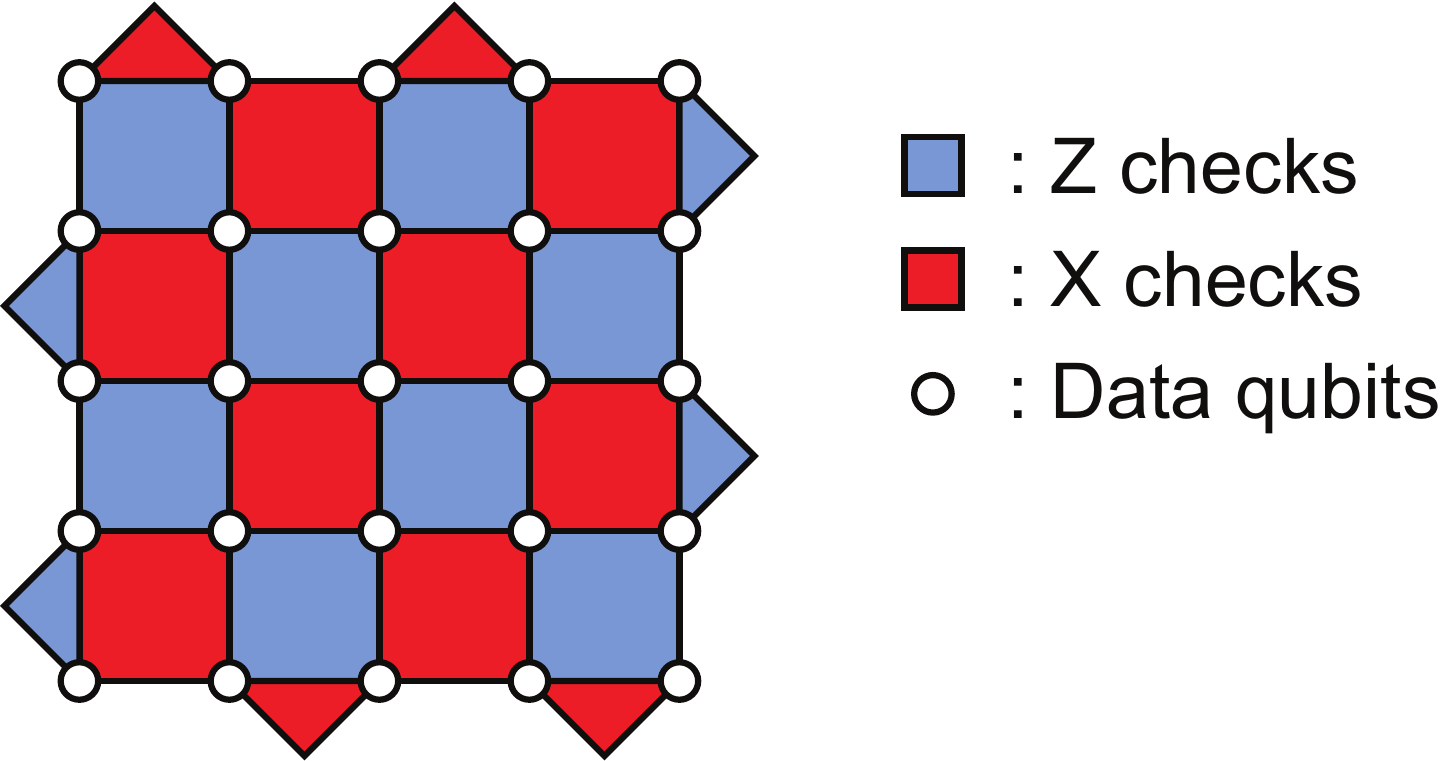}
  \hspace{20pt}
  \includegraphics[width=0.35\textwidth]{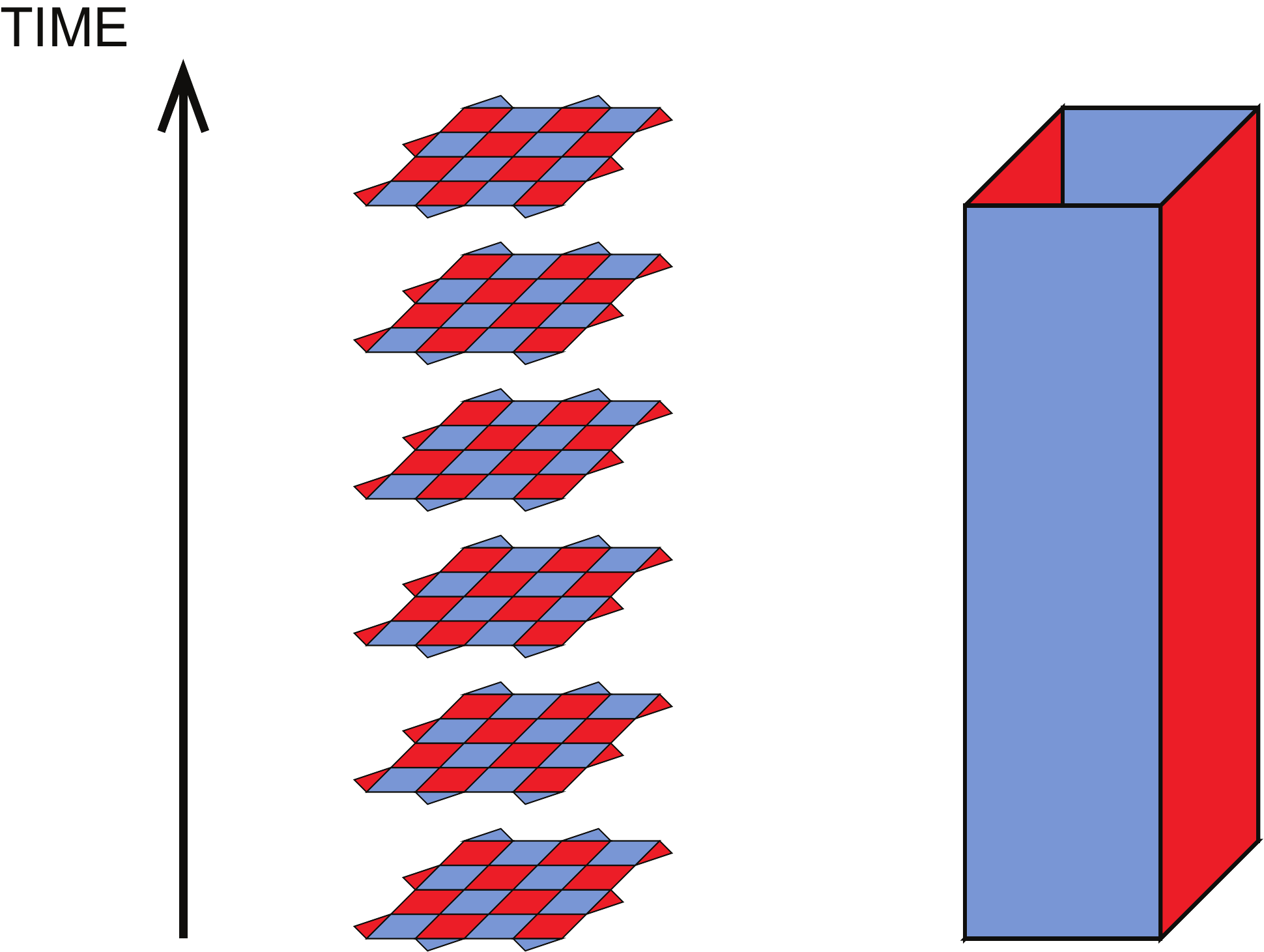}
  \caption{From left to right: the surface code and the defect diagram.}
  \label{fig:surface_code}
\end{figure}

During the computation, the Z and X check operators are repeatedly measured using the ancillary qubits placed at the center of the check operators. If we assume a circuit-level noise model, i.e., a model in which Pauli errors are inserted for each physical gate and idling operation, Pauli errors during parity measurements might propagate to multiple data qubits; such errors are called hook errors. With a carefully designed order of CNOT gates, these hook errors do not reduce the fault distance~\cite{hirai2026no,gidney2025alternating,kishony2026surface}. This is another advantage of surface codes, as the concatenated codes discussed later require large overheads for fault-tolerant parity checks.

Since the surface code has a simple structure, it can be expressed by more abstract diagrams. One of them is the \textit{defect diagram} shown in the right panel of Fig.~\ref{fig:surface_code}. The red (blue) walls represent the X (Z) boundaries, and the empty space enclosed by the walls is the bulk of the surface code, where the stabilizers are defined. The logical operators in this diagram are defined in the same way as before: they are chains connecting two disjoint walls of the same Pauli type. In short, the diagram represents the spacetime trajectory of a surface code patch, and it can describe various operations. When we are restricted to nearest-neighbor interactions, logical operations are performed by a technique called lattice surgery~\cite{horsman2012surface,fowler2018low}. Lattice surgery can be executed by merging two surface-code patches into one and then splitting it back into two, as shown in Fig.~\ref{fig:lattice_surgery}. In this figure, the two patches are merged along their X boundaries, so the operation realizes a joint ZZ measurement between them. Patch rotation and movement can also be regarded as forms of lattice surgery. As we will see later, all the stabilizer measurements of outer codes are realized by this lattice surgery operation on the surface code.

\begin{figure}[h]
  \centering
  \includegraphics[width=0.30\textwidth]{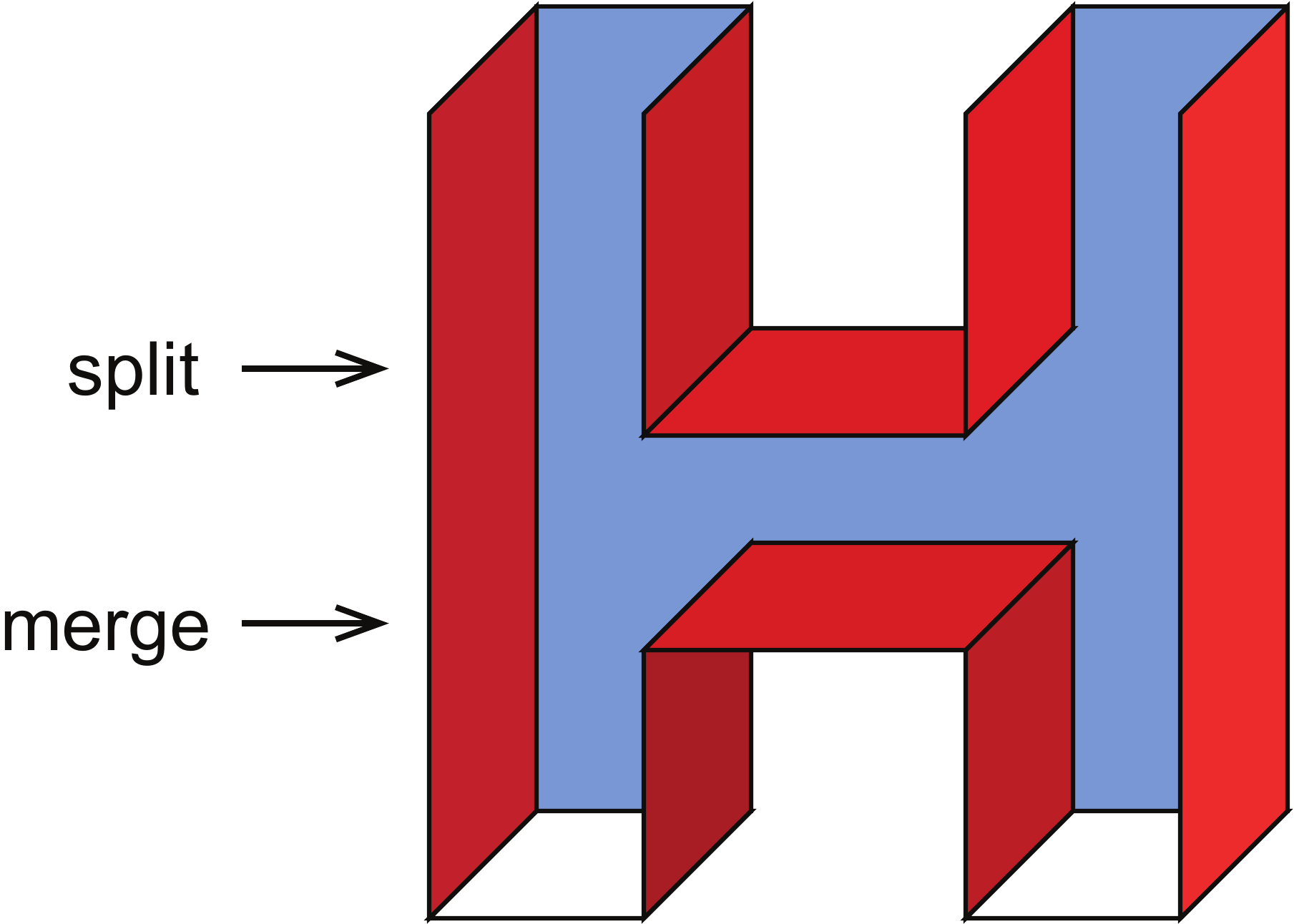}
  \caption{Lattice surgery. The front Z-type wall is made transparent for clarity. Time goes upward.}
  \label{fig:lattice_surgery}
\end{figure}

When we decode errors in the surface code, we can calculate, as a byproduct, soft information, which quantifies how confident the decoder is in its estimate for given syndrome values. One such piece of information is called the \textit{complementary gap}~\cite{gidney2025yoked,bombin2024fault,hutter2014efficient}, which is defined as the log-likelihood ratio between the minimum-weight matching and the minimum-weight matching in the complementary logical class. In the multi-yoked surface code, such a value is assigned to each edge of the outer-code decoding graph. Note that the computational complexity of evaluating the complementary gap grows exponentially as the number of logical qubits increases. Recently, it has been shown that this overhead can be avoided by using a different soft-information measure that is similar to the complementary gap~\cite{meister2024efficient}.

\subsection{Concatenated Codes and $\boldsymbol{C_4/C_6}$ Codes}\label{sec:concatenated_codes}
Concatenated codes are codes in which qubits are recursively encoded into quantum error-correcting codes~\cite{aharonov2008fault,knill1996concatenated,knill2005quantum,yamasaki2024time,yoshida2025concatenate,goto2024high}. Here, the concatenation level is counted starting from the physical qubits. 
One of the best-known examples of concatenated codes is the $C_4/C_6$ code introduced by E.~Knill~\cite{knill1996concatenated,knill2005quantum}. This code consists of two types of codes. One is the $[[4,2,2]]$ $C_4$ code, and the other is the $[[6,2,2]]$ $C_6$ code. The $C_4/C_6$ code first encodes physical qubits with the $C_4$ codes, which are recursively encoded with $C_6$ codes. Fig.~\ref{fig:c4c6} shows the code construction.

In general, encoding more than one logical qubit per block may cause correlated error events at the higher level, and the code distance does not necessarily scale as $d_{\mathrm{lower}} d_{\mathrm{higher}}$, where $d_{\mathrm{lower}}$ and $d_{\mathrm{higher}}$ are the code distances of the lower- and higher-level codes. We call such correlated errors \textit{block errors}. 
A simple solution to this problem is to distribute the $ i$-th-level logical qubits across different code blocks at level-$(i+1)$. However, this increases the minimum number of physical and logical qubits required to achieve a target logical error rate and is not suitable for practical implementation.
The $C_4/C_6$ code resolves this problem by letting the $C_6$ code detect such block errors on a pair of logical qubits, and its code distance does scale as $d_\text{lower}d_\text{higher}$. The code distance of the $C_4/C_6$ code therefore scales as $2^N$, where $N$ is the number of concatenation levels. 

\begin{figure}[h]
  \centering
  \includegraphics[width=0.40\textwidth]{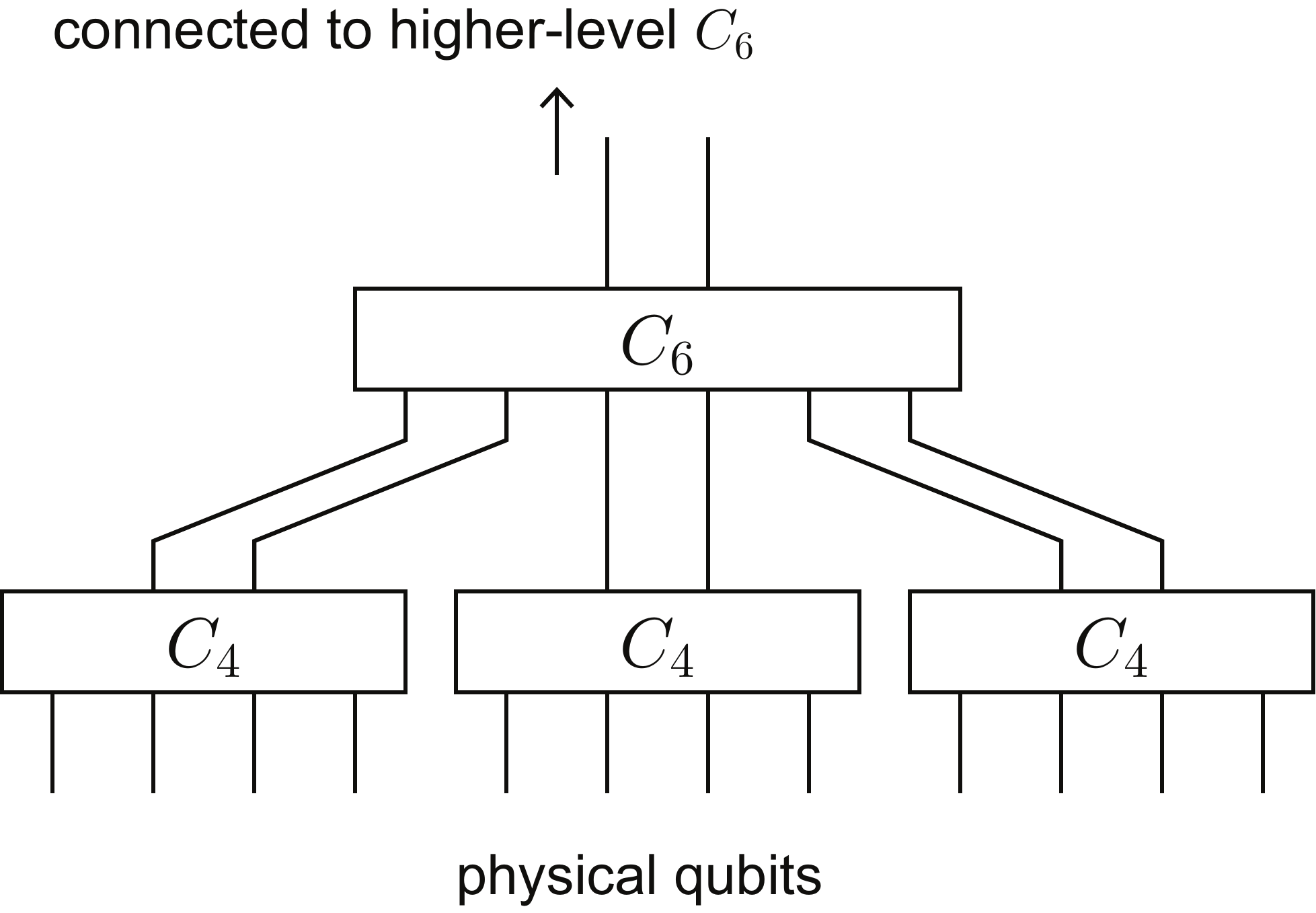}
  \caption{A part of the $C_4/C_6$ code. Each box represents a code block. Reading from bottom to top, the lines entering a box from below are the qubits that constitute the block, and the lines leaving it from above are its logical qubits.}
  \label{fig:c4c6}
\end{figure}

Unlike surface codes, a naive implementation of parity measurements of concatenated codes is not fault-tolerant due to hook errors, and we need to introduce techniques to fault-tolerantly measure parities. One well-known method is Knill's error correction, which implements parity measurements using the Knill gadget~\cite{knill1996concatenated,knill2005quantum}. In this gadget, the parity measurements are performed by state teleportation, which consumes logical Bell pairs at each level of concatenation. Therefore, we have to prepare verified Bell pairs in advance. Because this generation uses error detection and post-selection, it introduces significant spacetime overhead. Other approaches, such as Shor's and Steane's error correction, also introduce similar overheads.

Concatenated codes with higher encoding rates than the $C_4/C_6$ code have been extensively explored. For example, Refs.~\cite{yoshida2025concatenate,yamasaki2024time} achieved a constant space overhead in the asymptotic limit of many concatenation levels. However, the overheads from syndrome-extraction gadgets might significantly degrade the density in the practical regime. Therefore, achieving high-density memory-space on a square-grid qubit array is still challenging. 
Recently, Ref.\,\cite{wills2026concatenating} has proposed concatenating Reed–Solomon codes on top of bivariate bicycle codes, each encoding 12 logical qubits, by treating each lower-level block as a qudit. Although this enables simple concatenation that is tolerant to block errors, this approach requires the outer code to detect all qudit errors. As we will explain later in Sec.\,\ref{sec:chain-code},  our proposal, the chain code, detects only problematic errors among them.

\section{Existing Work: Yoked Surface Codes}
\label{sec:existing_work}
Multi-yoked surface codes are heavily based on the idea of yoked surface codes by C.~Gidney \textit{et al.}~\cite{gidney2025yoked}. In this section, we introduce their concept and explain our strategy to further improve the density.

\subsection{Yoked Surface Codes}
There are many ideas to improve the density of the normal surface code. However, almost all ideas impose strong hardware requirements, such as non-nearest-neighbor interactions, high-degree connectivity, and more-than-two-dimensional embeddings. 
Yoked surface codes need none of them; they require only nearest-neighbor planar connectivity on a square lattice. This connectivity has already been realized in real devices~\cite{google2023suppressing,eickbusch2025demonstration}. 

The strategy of the yoked surface codes for densifying the surface codes is to concatenate a high-density code on top of them. We refer to the surface code as the \textit{inner code} and to the concatenated code as the \textit{outer code}. 
Yoked surface codes employ iceberg codes or quantum parity check codes as the outer code for $d_{\rm outer}=2$ and $d_{\rm outer}=4$, respectively. The left panel of Fig.~\ref{fig:yoked_surface_codes} shows check operators of the $[[6,4,2]]$ iceberg code and logical Pauli operators for four logical qubits. The parity measurements of the outer code in the yoked surface codes are performed by lattice surgery on the surface codes, as shown in the middle panel of Fig.~\ref{fig:yoked_surface_codes}. The probability of causing correlated errors across multiple surface codes, which are analogous to hook errors in the conventional approaches, is sufficiently suppressed by widening the lattice surgery path. The widened paths are extended along the time direction rather than the spatial directions, so as to minimize the workspace needed for the parity checks, as shown in the right panel of the figure. The lattice surgeries for the different checks are arranged sequentially along the time direction.
This construction enables more space-efficient concatenation of high-density parity-check codes than syndrome-extraction gadgets in conventional concatenated codes.
With this syndrome extraction, both inner and outer codes become matchable. Therefore, they can be decoded simultaneously using MWPM decoders by assigning the detectors of the outer code to the corresponding boundary detectors of the inner code.

\begin{figure}[h]
  \centering
  \includegraphics[width=0.80\textwidth]{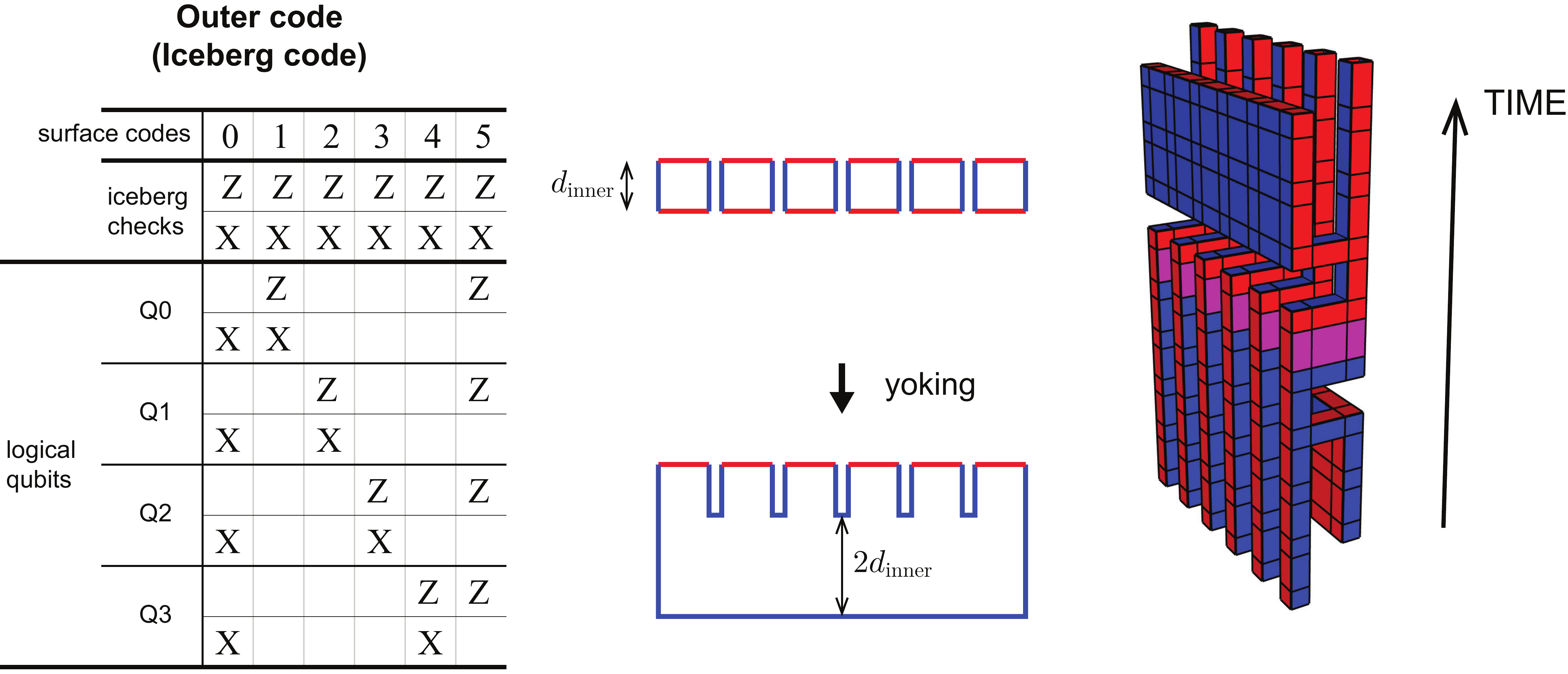}
  \caption{From left to right: outer code (iceberg code), the surface-code lattice surgery for the outer Z check, and the defect diagram representing the Z- and X-type checks with a patch rotation.}
  \label{fig:yoked_surface_codes}
\end{figure}

\subsection{Open Questions We Address}
To further improve the density of the yoked surface code, we should increase the distance of outer codes with high encoding rates and reduce the distance of inner surface codes with low encoding rates. There are two ways to increase the outer code distance: one is to use a single code with large distance, and the other is to use a multi-level concatenated code. 

Ref.\,\cite{gidney2025yoked} already pursued the former approach by replacing $d_{\rm outer}=2$ iceberg codes with a $d_{\rm outer}=4$ quantum parity check code. However, they did not go beyond distance four because increasing the outer code distance does not work straightforwardly for the following reasons.
If we increase the code distance of the outer code, the number of outer checks increases. This makes the interval between two subsequent parity measurements of the outer codes longer. This will eventually increase the idling errors of surface-code qubits and degrade the overall performance. Although this can be mitigated by increasing the inner code distance, this contradicts the original motivation to decrease the inner code distance. Also, increasing the inner code distance makes the available total code distance $d_{\rm inner} d_{\rm outer}$ highly discretized, and typically leads to excessive logical error suppression below what is actually required. The qubits spent on this excess suppression could instead be used to make the code denser. 

Therefore, this paper explores the other direction: increasing outer-code distance by further concatenating $d=2$ high-density parity check codes over the yoked surface codes. This direction is promising because logical error rates of qubits encoded with the level-0 (surface) and level-1 (iceberg) codes are suppressed, and the required interval of the additional outer-code checks can be much longer than that of the iceberg codes. Thus, further concatenations are expected to have little effect on the lower-level check intervals. Our research question is therefore ``How much can we improve the density of the yoked surface code by increasing the concatenation level?''

To address this question, we need to resolve two difficulties. First, if we use a concatenated code as the outer code, its upper-level codes must manage block errors in lower-level codes. Although the $C_6$ code satisfies this property, it is an open question whether we can satisfy this criterion for a lower-level code encoding more than two logical qubits. 
Second, even if a series of high-density parity check codes exists, it remains open whether they can be implemented on a square-grid lattice with modest workspace overhead. 
We address these two open questions by proposing chain codes and memory-space layouts tailored for multi-level concatenated codes.

\section{Multi-yoked Surface Codes}
\label{sec:proposed_method}
We now introduce multi-yoked surface codes, which concatenate additional $d=2$ codes on top of the yoked surface codes. The multi-yoked surface code is built from two new components: a family of codes called chain codes, and a memory-space layout of the surface-code patches. We consider up to 3-level concatenation for multi-yoked surface codes. The code thus consists of level-0 surface codes, level-1 iceberg codes, level-2 chain codes, and level-3 chain codes. Below, we introduce the framework for chain codes and show a construction of level-2 and level-3 chain codes. Then, we show how to allocate surface-code patches to achieve high density and overall memory-space construction.

\subsection{Chain Codes} \label{sec:chain-code}
In Subsection~\ref{sec:concatenated_codes}, we introduced the $C_4/C_6$ code, in which a $C_6$ block detects the block errors of the three $C_4$ blocks one level below. In this subsection, we generalize this concept by constructing a code that can detect all single block errors of the iceberg codes one level below, as shown in Fig.~\ref{fig:generalized_c4c6}. 
We concentrate on distance-two CSS codes concatenated on top of the iceberg code. Note that hook errors during parity checks are discussed in Sec.\,\ref{sec:patch_allocation}, so we ignore the hook errors in this subsection. Also, we assume that Z errors and X errors are estimated independently, and that X and Z errors are detected by Z and X checks, respectively. Therefore, without loss of generality, we focus on the mechanism of detecting X errors with Z checks.

\begin{figure}[h]
  \centering
  \includegraphics[width=0.75\textwidth]{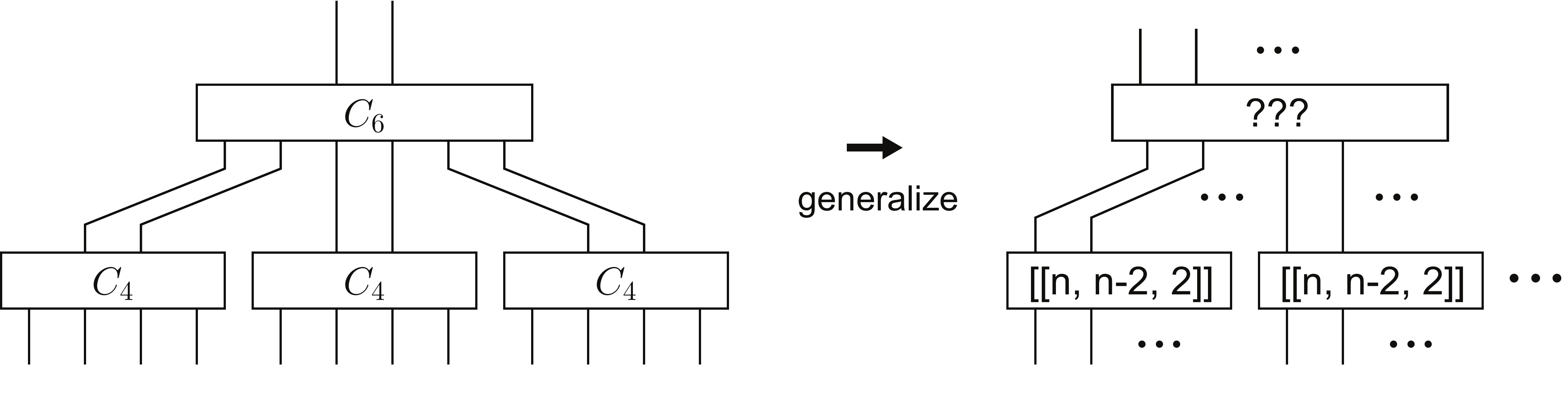}
  \caption{From left to right: A part of the $C_4/C_6$ code and the generalized version of it.}
  \label{fig:generalized_c4c6}
\end{figure}

Suppose level-1 to level-$(i-1)$ concatenated codes constitute an $[[n,k,d]]$ code, and we want to construct a level-$i$ code with distance two that is concatenated over $r$ copies of them. We call every logical X error that can be induced by a weight-$(2d-1)$ undetectable X error on level-0 (i.e., surface-code qubits in the multi-yoked surface code) qubits a \textit{problematic error} and denote their set by $\mathcal{P}_{\rm prob, X} \subset \{I,X\}^{\otimes k}$. 
Next, we list an independent set of Pauli Z operators that can detect all the problematic errors. We call this set of Pauli operators a \textit{seed code} $\mathcal{P}_{\rm seed, Z} \subset \{I,Z\}^{\otimes k}$, which satisfies $\forall P \in \mathcal{P}_{\rm prob, X},\ \exists Q \in \mathcal{P}_{\rm seed, Z},\ {\rm s.t.}\ QP=-PQ$.
Then, the stabilizer generator set of a chain code is constructed as $\{P^{\otimes r} | P \in \mathcal{P}_{\rm seed, Z}\} \cup \{P^{\otimes r} | P \in \mathcal{P}_{\rm seed, X}\}$. Note that elements of X and Z seed codes $\mathcal{P}_{\rm seed, X}$ and $\mathcal{P}_{\rm seed, Z}$ do not necessarily commute with each other. Even in this case, the chain codes satisfy the condition of stabilizer codes if we choose $r$ as an even number. 
Thus, the chain code encodes $rk-|\mathcal{P}_{\rm seed, X}|-|\mathcal{P}_{\rm seed, Z}|$ logical qubits with $rk$ lower-level qubits, and it is capable of detecting all block errors at the lower-level codes. Its encoding rate converges to one in the large-$r$ limit.

A non-trivial part of the above procedure is finding a seed code. A simple method to find a seed code is to choose $\mathcal{P}_{\rm seed, X} = \{X_0, ..., X_{k-1}\},\ \mathcal{P}_{\rm seed, Z} = \{Z_0, ..., Z_{k-1}\}$, and an even $r$. This always detects all problematic errors but results in a low encoding rate of the chain code. Thus, we should minimize the number of members in the seed codes. Although this minimization might be computationally expensive, we can find the minimum set by brute-force or heuristic enumeration as explained below.

\subsection{Finding a Level-2 Chain Code on the Level-1 Iceberg Code}
\label{sec:level2-code}
Using the chain-code framework, we find a high-rate chain code for iceberg codes. The set of problematic errors depends on how we define the logical operators. For the iceberg codes, we define $n-2$ logical Pauli X/Z operators as $X_{L,i} = X_0 X_{i+1}$ and $Z_{L,i} = Z_{i+1} Z_{k+1}$. See the left panel of Fig.\,\ref{fig:yoked_surface_codes} for an example with $[[6,4,2]]$. Then, we enumerate all the problematic errors in iceberg codes. Let $1 \leq i, j \leq k$ be two different integers. To enumerate problematic errors, we need to consider Pauli errors on surface-code qubits up to weight three, but the weight-1 and weight-3 errors are detected by iceberg codes. Therefore, we need only consider weight-2 errors for enumerating problematic errors as undetectable candidates. Any problematic error caused by weight-2 surface-code errors can be classified into the following four patterns:
\begin{enumerate}[label=(\roman*), align=left,]
  \item Single error: $X_0 X_{i}$ corresponds to $X_{L, i-1}$,
  \item Two errors: $X_i X_j$ corresponds to $X_{L, i-1} X_{L,j-1}$,
  \item $k-1$ errors: $X_i X_{k+1}$ corresponds to $\prod_{l\neq i-1, 0\leq l\leq k-1} X_{L,l}$ up to stabilizer,
  \item $k$ errors: $X_0 X_{k+1}$ corresponds to $\prod_{0\leq l\leq k-1}X_{L,l}$ up to stabilizer.
\end{enumerate}
Thus, their union constitutes $\mathcal{P}_{\rm prob, X}$ for the iceberg code. Similarly, we can enumerate all the possible problematic Z errors.

Next, we find a small seed code that detects all problematic X and Z errors. Regarding the number of checks of a seed code, the following proposition holds:
\begin{prop}\label{prop:minimum_checks}
Let $\mathcal{P}_{\rm prob, X}$ be the set of problematic X errors for an iceberg code $[[n,k=n-2,2]]$ and $c_z$ be the size of the Z-check seed code $\mathcal{P}_{\rm seed, Z}$ for $\mathcal{P}_{\rm prob, X}$. Then, $c_z$ is lower bounded as $c_z \geq \lceil \log_2{(k+1)}\rceil$. The same holds for X checks for Z errors.
\end{prop}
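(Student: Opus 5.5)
The plan is to turn the detection condition into a statement about distinct syndromes, and then count. Identify each Z-type seed check $Q\in\mathcal{P}_{\rm seed, Z}\subset\{I,Z\}^{\otimes k}$ with its support vector $q\in\mathbb{F}_2^k$, and each X-type error $P\in\{I,X\}^{\otimes k}$ with its support vector $p\in\mathbb{F}_2^k$. Then $Q$ anticommutes with $P$ if and only if $q\cdot p=1$ over $\mathbb{F}_2$. Stack the $c_z$ check vectors as the rows of a $c_z\times k$ binary matrix $H$. The seed-code condition of Sec.~\ref{sec:chain-code} then says exactly that $Hp\neq 0$ for every $p$ in the support set of $\mathcal{P}_{\rm prob, X}$.

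The key step uses the pattern (ii) errors in the enumeration above. Since $X_iX_j$ with $i\neq j$ gives $X_{L,i-1}X_{L,j-1}$, every vector $e_a+e_b$ with $a\neq b$ is problematic. Hence $He_a\neq He_b$ for all $a\neq b$, so the $k$ columns of $H$ are pairwise distinct.

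Pattern (i) shows that every single $X_{L,a}$ is problematic, so $He_a\neq 0$ for all $a$. The $k$ columns are therefore $k$ distinct nonzero vectors in $\mathbb{F}_2^{c_z}$. This forces $k\le 2^{c_z}-1$, that is, $c_z\ge\log_2(k+1)$, and since $c_z$ is an integer, $c_z\ge\lceil\log_2(k+1)\rceil$. Patterns (iii) and (iv) only add further constraints, so they are not needed for the lower bound.

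For X checks detecting Z errors, I would repeat the argument with the problematic Z errors. By the symmetric choice of logical operators, $Z_{i+1}Z_{k+1}=Z_{L,i}$ gives single logical errors, and $Z_iZ_j$ with $1\le i<j\le k$ equals $Z_{L,i-1}Z_{L,j-1}$, giving pairs. The same column-distinctness argument then applies.

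The main obstacle is checking carefully that patterns (i) and (ii) really cover every single and every pair of logical indices, including the index conventions $i-1\in\{0,\dots,k-1\}$. I would also note that the seed checks are counted as independent operators; independence is not needed for the bound itself, since it only lowers the effective count.
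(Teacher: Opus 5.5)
Your proposal is correct and is essentially the paper's proof: patterns (i) and (ii) force the columns of the Z-type check matrix to be nonzero and pairwise distinct, so $k \le 2^{c_z}-1$, and the X-check case follows symmetrically. Your version spells out the binary-vector bookkeeping and the Z-error index conventions that the paper leaves implicit.
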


\begin{proof}
By the above error classification, the Z-check matrix of a seed code must detect all weight-one and weight-two errors on $k$ logical qubits. Detecting all weight-one errors requires that the columns of the Z-type check matrix be nonzero, and detecting all weight-two errors requires that the columns be pairwise distinct. Therefore, we can upper-bound the number of columns as $2^{c_z}-1\geq k$, and hence $c_z\geq \log_2{(k+1)}$. Since $c_z$ is an integer, we obtain $c_z\geq \lceil \log_2{(k+1)}\rceil$.  The same argument applies to X checks.
\end{proof}

We find such minimum sets of checks by hand for several iceberg codes of interest. Here, we show the seed codes for the $[[8,6,2]]$ and the $[[16,14,2]]$ iceberg codes. We are interested in them because the $[[8,6,2]]$ and the $[[16,14,2]]$ iceberg codes are the highest-rate codes that have a seed code with size $c_z=3$ and $c_z=4$, respectively. The seed code for the $[[8,6,2]]$ iceberg code is listed as follows, where each line corresponds to a single check operator.
\begin{align}
  \begin{array}{cccccc}
    Z&I&I&Z&Z&I\\
    I&Z&I&Z&I&Z\\
    I&I&Z&I&Z&Z\\
    I&X&X&I&I&X\\
    X&I&X&I&X&I\\
    X&X&I&X&I&I\\
  \end{array}
\end{align}
Note that, as elements in the seed code commute with each other, we can choose an even or odd integer $r$ in this case. Then, we horizontally copy this outer-check table, as shown in Fig.\,\ref{fig:chain_codes}. We list the logical Pauli operators for the chain codes, which can be used to find problematic errors in the upper-level code. We also similarly find a seed code for $[[16,14,2]]$ iceberg codes. The list of the check operators is as follows.
\begin{align}
  \begin{array}{cccccccccccccc}
    Z&I&Z&Z&I&Z&I&I&Z&I&I&Z&Z&I\\
    I&Z&I&I&Z&I&Z&Z&Z&I&I&Z&Z&I\\
    I&I&Z&I&Z&Z&I&Z&I&Z&I&Z&I&Z\\
    I&I&I&Z&I&Z&Z&Z&I&I&Z&I&Z&Z\\
    X&I&X&X&I&X&I&I&X&I&I&X&X&I\\
    I&X&I&I&X&I&X&X&X&I&I&X&X&I\\
    I&I&X&I&X&X&I&X&I&X&I&X&I&X\\
    I&I&I&X&I&X&X&X&I&I&X&I&X&X
  \end{array}
\end{align}

\subsection{Finding a Level-3 Chain Code on the Level-2 Chain Code}
\label{sec:level3-code}
Next, we construct a level-3 chain code over level-1 iceberg and level-2 chain codes. We expect the combination of level-1 and level-2 codes to tolerate up to three Pauli errors in surface-code qubits. As the size of the level-2 chain code is typically large, a brute-force approach does not work. Instead, we employ the following heuristic procedure to find a seed code for high-rate level-3 chain codes. 

We assume that the level-1 iceberg code and level-2 chain code constitute $k$ logical qubits per block. Our heuristic exploration for the level-3 code runs with a positive input parameter $c$ as follows. First, we choose a classical code $[k, k - c, d]$ with the largest $d$ from a code table~\cite{codetables}. We use this code's parity-check matrix as the seed code for both Z and X errors. Then, we iteratively permute an assignment of $k$ logical qubits in the level-2 codes to the input of the level-3 seed code. In each trial, we compute the distance of the chain code generated from the level-3 seed code. If the distance reaches eight, the procedure stops and outputs the chain code. If not, the procedure aborts. While small $c$ results in a high-rate level-3 code, this procedure aborts with high probability if the chosen $c$ is too small. Therefore, we start with small $c$ and iterate to larger $c$ until we find a solution. 

\begin{figure}[h]
  \centering
  \includegraphics[width=0.75\textwidth]{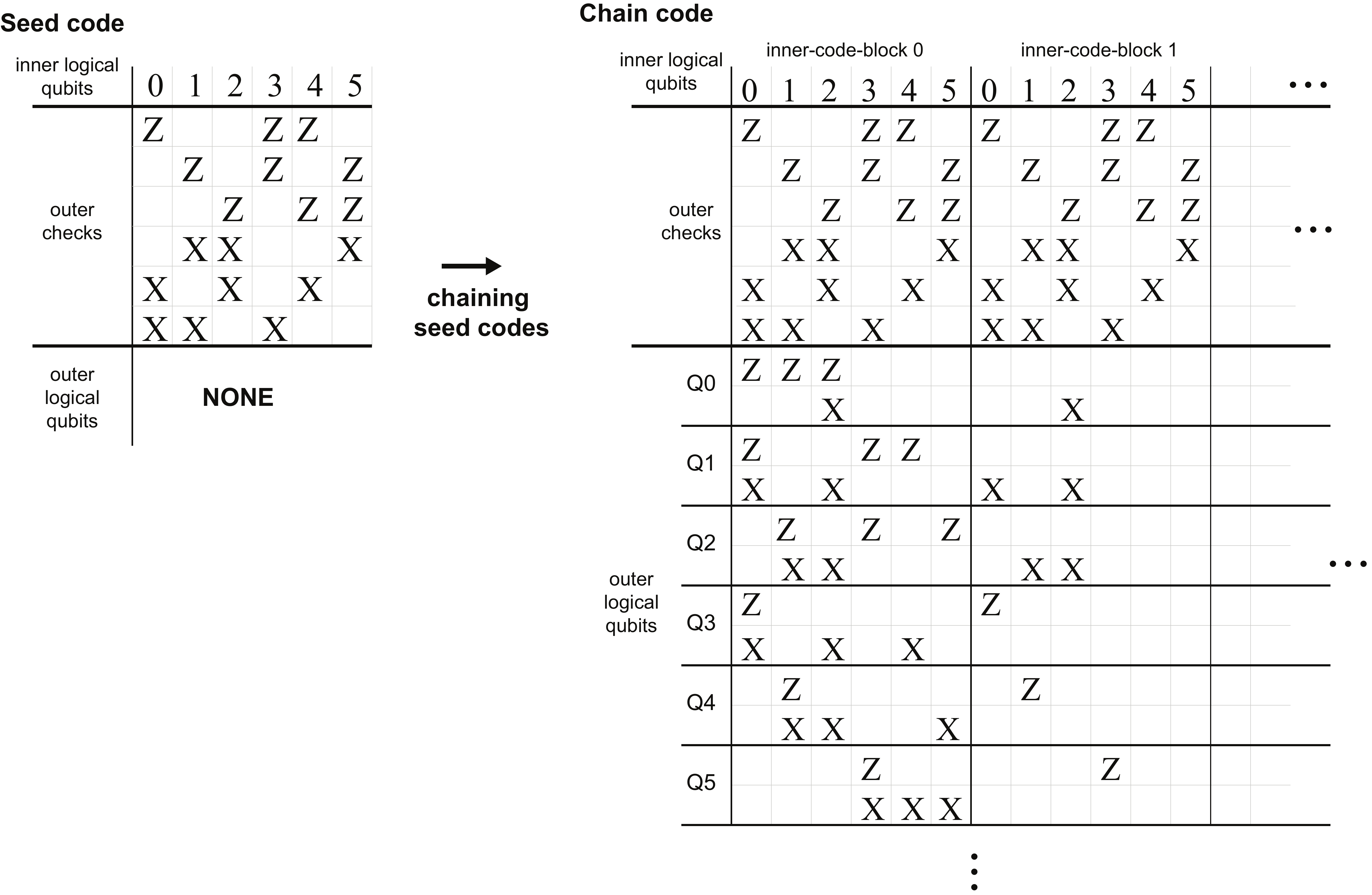}
  \caption{From left to right: the seed code and the chain code.}
  \label{fig:chain_codes}
\end{figure}

\subsection{Workspace Design and Yoking Interval}\label{sec:patch_allocation}
As in the yoked surface code, the multi-yoked surface code reserves workspaces in which the lattice surgeries for yoking are performed; no logical qubits are encoded there. A workspace admits two designs, a time-efficient one and a space-efficient one, both shown in Fig.~\ref{fig:yoking_config}, which depicts the defect diagram during yoking as seen from the side. Which version to use for each check depends strongly on the outer code design; we describe our choices in Sec.\,\ref{sec:overall}.
\begin{figure}[h]
  \centering
  \includegraphics[width=0.45\textwidth]{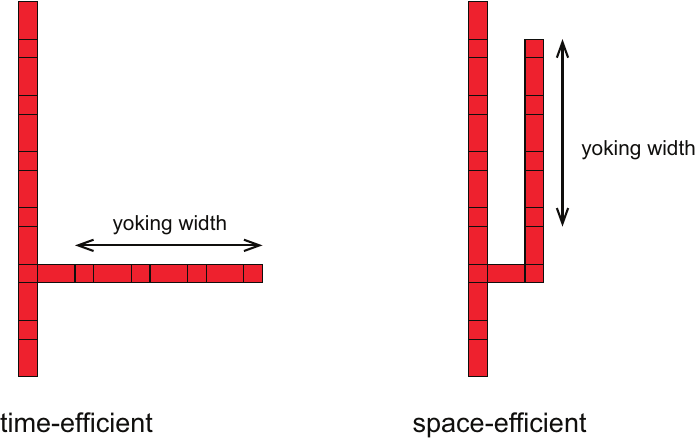}
  \caption{The time-efficient and space-efficient versions of yoking designs.}
  \label{fig:yoking_config}
\end{figure}
In the time-efficient version, the CNOT scheduling of the surface code must be designed carefully, because this lattice surgery presents vertical and horizontal X (Z) boundaries at the same time, which may give rise to hook errors in the surface code. As shown in our previous work~\cite{hirai2026no}, however, such hook errors can be avoided in a hex-grid implementation. 

The surface-code patches are allocated as shown in Fig.~\ref{fig:patch_alloc_example}. Several workspaces are placed to perform yoking, i.e., parity checks for outer codes with lattice surgery. Since most patches are not adjacent to a workspace, the workspaces move up and down through the memory-space to reach them. To keep the memory dense, the number and size of workspaces should be as small as possible, which means that not all patches can be yoked simultaneously; instead, each patch is yoked periodically. If the width of the yoking is too small, a hook error on outer codes can occur during the lattice surgery for outer-code checks, which can reduce the fault distance of outer codes. To avoid this problem, choosing a yoking width of $d_\text{inner} d_\text{outer}$ is sufficient, as logical errors with weight $ d_\text {inner} d_\text{outer}$ are acceptable regarding the target logical error rates. However, this is only a sufficient condition, and we can further reduce the width if the outer codes are designed to tolerate hook errors. In fact, as we will explain in Sec.\,\ref{sec:overall}, we find a construction in which a width-$4d_\text{inner}$ workspace is sufficient to maintain the fault distance $d_\text{inner} d_\text{outer}$ for the obtained chain code when $d_\text{outer} = 8$.

The yoking intervals must also be designed carefully to achieve a small logical error rate while keeping the workspaces relatively small. Furthermore, since the workspaces themselves move up and down, the times at which two of them meet must also be scheduled carefully, so that their checks do not interfere with each other. At the same time, all checks at the same level should share a common yoking interval, because a single check with a long yoking interval degrades the logical error rate even when all the other checks at that level have short intervals. Since the concatenated code has a hierarchical structure, we can design the multi-yoked surface code so that the lower-level checks are performed more frequently and the higher-level ones less frequently. This flexibility in the check schedule is key to significantly reducing the logical error rate of multi-yoked surface codes, which in turn improves the density.

\begin{figure}[h]
  \centering
  \includegraphics[width=0.45\textwidth]{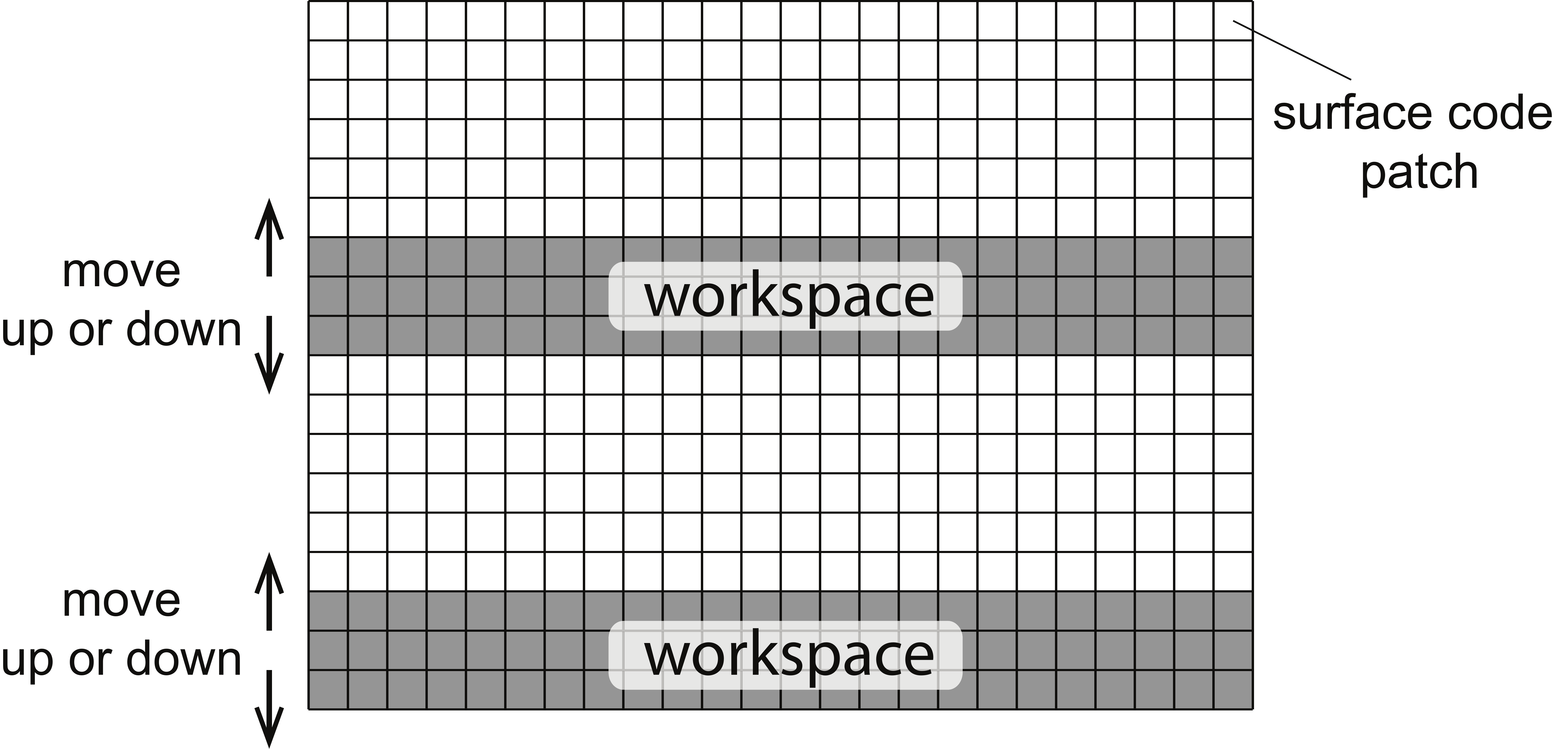}
  \caption{Example of the memory-space layout with the working space.}
  \label{fig:patch_alloc_example}
\end{figure}

\subsection{Yoking Weight Reduction}\label{sec:weight_reduction}
The X and Z parity check matrices for concatenated codes designed with the framework of chain codes typically have high-weight checks, i.e., each check requires lattice surgery operations acting on many surface-code qubits. This is not preferable for two reasons. First, many-body lattice surgery for yoking induces many types of hook errors, which can prevent us from setting the workspace width smaller than $d_\text{inner} d_\text{outer}$. Second, many high-weight checks typically make the outer-code decoding graph have high-weight hyperedges. This degrades the throughput and performance of decoding algorithms compatible with non-matchable decoding graphs, such as Tesseract~\cite{beni2025tesseract}. To mitigate these problems, this section introduces a technique to reduce the weight of a given chain code.

Yoking weight reduction is a technique that heuristically reduces the weight of checks for concatenated codes by modifying the check matrix without changing the encoding space. We apply the following mapping at each level:
\begin{align}
  H'_\text{L1} &= H_\text{L1},\nonumber\\
  H'_\text{L2} &= A H_\text{L2} + B H_\text{L1},\\
  H'_\text{L3} &= C H_\text{L3} + D H_\text{L2} + E H_\text{L1},\nonumber
\end{align}
where $H_{\text{L}i}$ is the check matrix of level $i$, the matrices $A$, $B$, $C$, $D$ and $E$ are binary, and $A$ and $C$ are invertible. This induces the following mapping on the entire check matrix:
\begin{align}
  H' = MH = 
  \begin{bmatrix}
    I & O & O\\
    B & A & O\\
    E & D & C
  \end{bmatrix}\begin{bmatrix}
    H_\text{L1}\\
    H_\text{L2}\\
    H_\text{L3}
  \end{bmatrix},
\end{align}
where $I$ is the identity matrix. Since $I$, $A$, and $C$ are all invertible, the mapping $M$ is invertible as well, and applying it therefore leaves the codespace unchanged while potentially reducing the check weights.

 We searched for such $M$ using LLM agents, and we numerically verified that the resulting code retains the distance $d_\text{inner} d_\text{outer}$ with verified codes. As this procedure is complicated, our implementation and its explanation are available in a GitHub repository~\cite{github}.

\subsection{Overall Design and Parameter Exploration of Multi-Yoked Surface Codes}
\label{sec:overall}
Finally, in this subsection we present the overall design of one example of a multi-yoked surface code memory. We first fix the allocation of the patches and workspaces together with the number of checks at each level of yoking, and then construct a good outer code from these constraints. In this construction, we completely ignore the accessibility of the memory, so additional resources are required when loading the logical information. The cost for loading operations is discussed in Sec.\,\ref{sec:discussion}.

We allocate patches and workspaces as shown in Fig.~\ref{fig:patch_alloc_employed}. The memory space is separated into $m$ types of modules, as shown in the right panel. There are two types of workspaces: narrow workspaces and wide workspaces. Each module contains one wide workspace and $u$ narrow workspaces. Each white square in the figure is a surface-code patch of distance $d_{\rm inner}$. All module types share the patch configuration shown in the left panel, but differ in the check configurations of the concatenated code. For the outer code, we assume that each level-$(\ell+1)$ block takes $n_\ell$ level-$\ell$ blocks as input ($\ell = 0, 1, 2$).

\begin{figure}[t]
  \centering
  \includegraphics[width=0.50\textwidth]{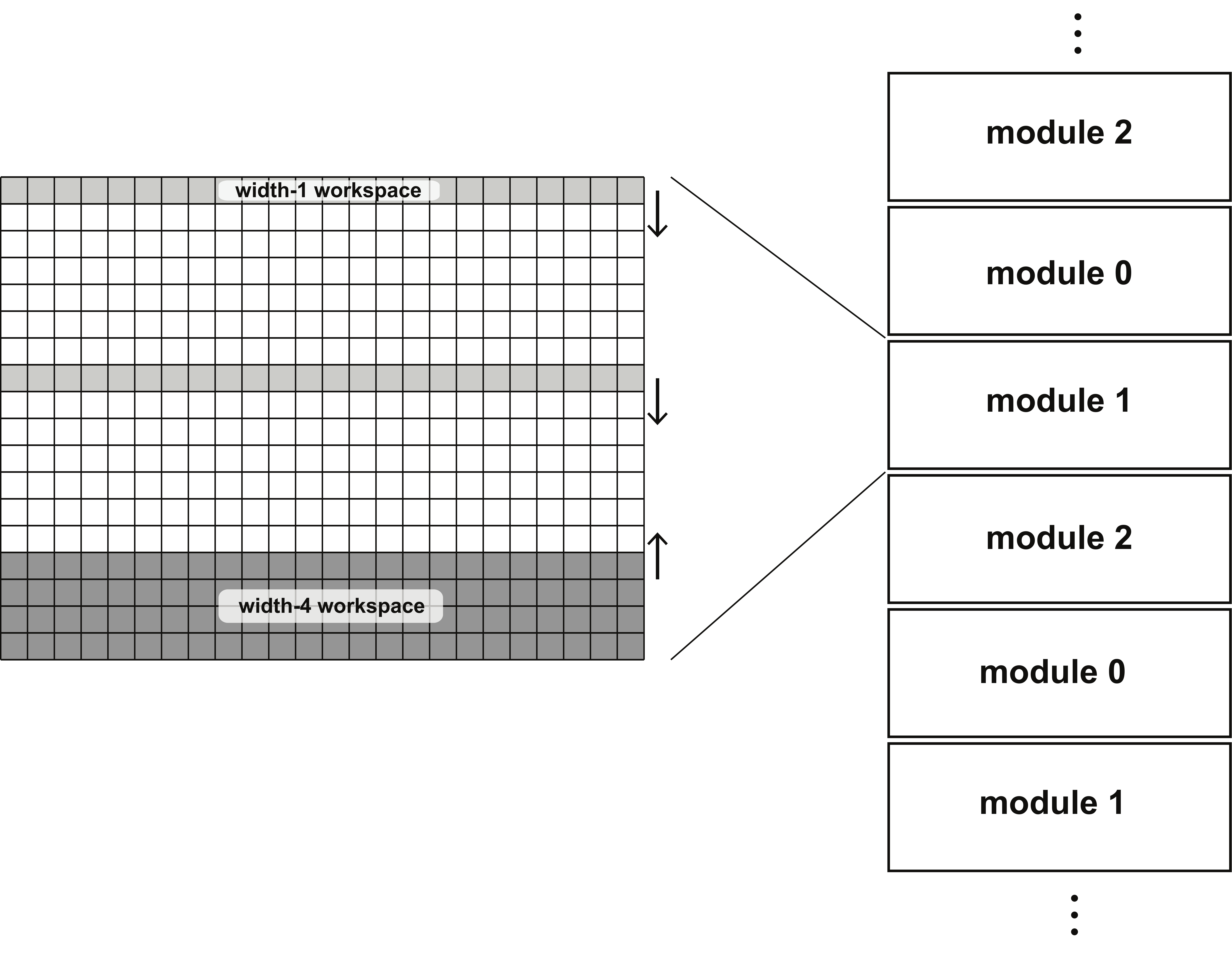}
  \caption{The employed memory-space layout for a good density with parameters $m=3$ types of modules, $u=2$ narrow workspaces per module, width between workspaces $v(s+1)=6$, and a wide workspace with $d_{\rm hook}=4$.}
  \label{fig:patch_alloc_employed}
\end{figure}

Level-1 yoking is performed with the space-efficient version using a narrow workspace. We require that hook-induced logical errors do not occur at distances below $d_\text{hook}d_\text{inner}$, where $d_{\rm hook}$ should be determined so as not to reduce the fault distance of the outer code. This means each check with space-efficient lattice surgery takes $(d_\text{hook}+1)d_\text{inner}$ rounds. Since the level-1 code is an iceberg code, a single level-1 yoking requires $t_\text{narrow} = (2d_\text{hook}+4)d_\text{inner}$ rounds, which consist of $2(d_\text{hook}+1)d_\text{inner}$ rounds for the two yoking checks and $2d_\text{inner}$ rounds of patch movement. This process is shown as a diagram at the top of Fig.\,\ref{fig:yoking_config_employed} (for $d_\text{hook}=4$). As shown in this figure, each narrow workspace moves down every $t_\text{narrow}$ rounds.

Level-2 and level-3 yokings are performed in the time-efficient version using a wide workspace, whose width is $d_\text{hook}d_\text{inner}$. We assume that the wide workspace takes $s$ times as long as the narrow workspace for the yokings and the movements, so $t_\text{wide} = st_\text{narrow}$. To synchronize yoking in the wide workspaces with the movement of the narrow workspace, we choose $s$ to be an integer. As $2d_\text{inner}$ rounds are consumed for patch movement, level-2 and level-3 yokings can use $t_{2,3} = st_\text{narrow}-2d_\text{inner}$ rounds. This process is shown as a diagram at the bottom of Fig.\,\ref{fig:yoking_config_employed} (for $d_\text{hook}=4$ and $s = 2$). As shown in this figure, each wide workspace moves up every $t_\text{wide}$ rounds.

\begin{figure}[t]
  \centering
  \includegraphics[width=0.40\textwidth]{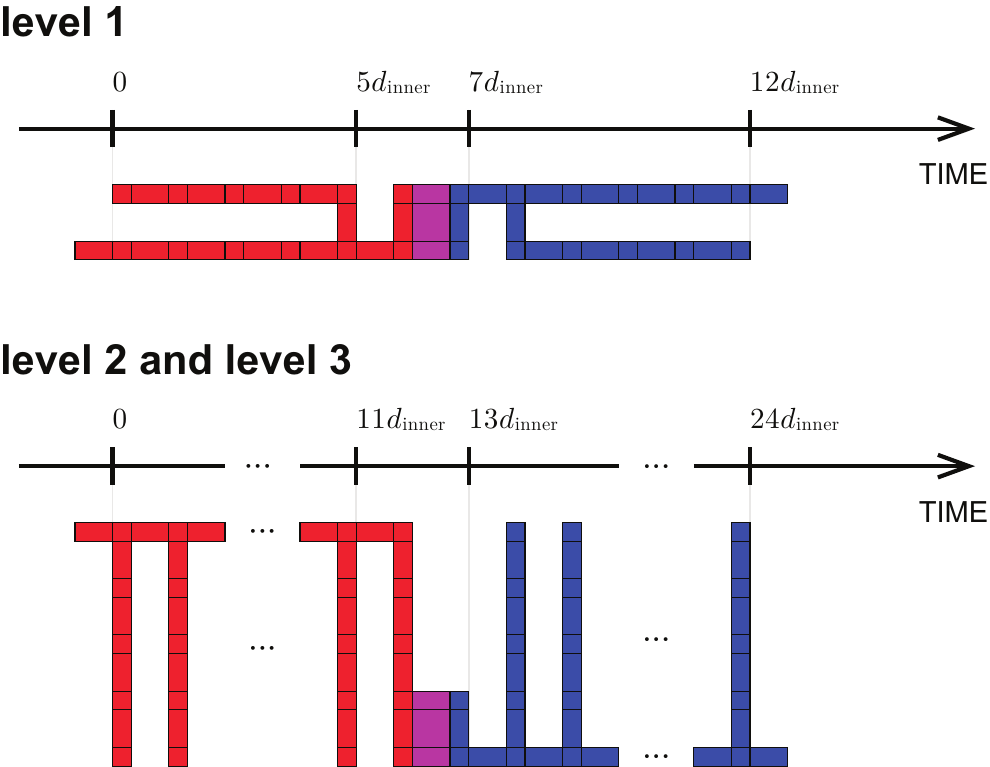}
  \caption{The time evolutions of level-1, level-2 and level-3 checks.}
  \label{fig:yoking_config_employed}
\end{figure}

Level-2 checks are performed each time a row of surface code patches meets a wide workspace. In contrast, level-3 checks are shared among $m$ modules, i.e., level-3 checks are performed $m$ times less frequently than level-2 checks. Therefore, if we denote the number of checks for level-2 and level-3 codes by $c_2$ and $c_3$, respectively, they must satisfy $c_2 + \lceil c_3/m \rceil \leq t_{2,3}/d_\text{inner}$. Once we determine the iceberg code size $n_0$ and its block count per level-2 chain code $n_1$, the number of level-2 checks $c_2$ is bounded by Proposition\,\ref{prop:minimum_checks}. This also bounds $c_3$ as $\lceil c_3/m \rceil \leq t_{2,3}/d_\text{inner} - c_2$. The level-3 seed code is explored using the subroutine explained in Sec.\,\ref{sec:level3-code} with $c_3$ checks (i.e., $c_3/2$ each for X and Z) as an input parameter.

Every $t_\text{wide}$ rounds, each narrow workspace moves down $s$ rows, and each wide workspace moves up one row. To avoid the narrow workspace moving down during the yoking of the wide workspace, we set the distance between the two workspaces to $v(s+1)$ rows, where $v$ is an integer, so that the two types always meet right after exactly one set has completed all of its checks. See Fig.\,\ref{fig:meanders} for an example of a full yoking schedule.
After several cycles, workspaces eventually reach the top or bottom of the memory space. Then, we swap the roles of the wide workspace at the top and the narrow workspace at the bottom by using free qubits, as shown in Fig.\,\ref{fig:swap_wide_and_narrow}.

\begin{figure}[h]
  \centering
  \includegraphics[width=0.60\textwidth, angle=0]{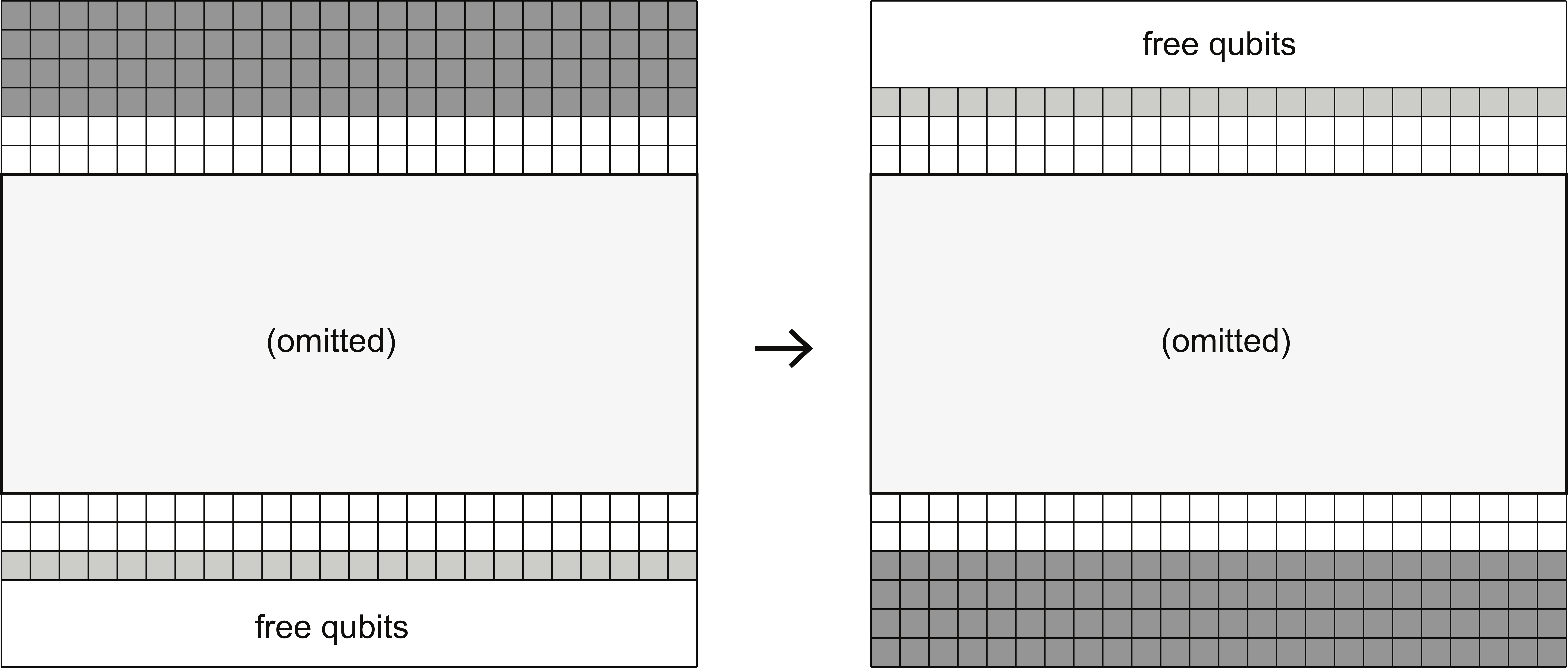}
  \caption{Snapshots of the moment when the workspaces reach the top and bottom ends of the memory-space and their roles are swapped using free qubits.}
  \label{fig:swap_wide_and_narrow}
\end{figure}

With this design, the level-1 yoking is performed every $t_1 = v(s+1)t_\text{narrow}$ rounds. The level-2 yoking is performed every $t_2 = sut_1$ rounds, and the level-3 yoking is performed every $t_3 = msut_1$ rounds. Thus, the ratio of yoking intervals can be tuned with the parameters $s$, $u$ and $m$.

In summary, there are several design parameters as listed in Table\,\ref{tab:design_parameters}. Although we will sweep these parameters to examine the parameter dependency of the multi-yoked surface codes, we list the best-performing set in the teraquop regime: inner-code distance $d_{\rm inner}=10$, $n_0=16$, $n_1=8$, $n_2=2$, $u=2$, $v=2$, $s=2$, $d_{\rm hook}=4$, and $m=3$. Fig.~\ref{fig:patch_alloc_employed} shows an example with these parameters.
These parameters determine the other parameters as follows. As we see in Sec.\,\ref{sec:level2-code}, the minimum and achievable seed-code size for $[[16,14,2]]$ iceberg code is $c_2 = 8$. Thus, the level-2 chain code becomes $[[112,104,2]]$. This requires $c_3$ to satisfy $c_3 \leq 42$. We run the subroutine in Sec.\,\ref{sec:level3-code} with this restriction, and the subroutine outputs a $[104, 83, 7]$ classical code as an X and Z seed code (21 checks for Z and X each, and thus $c_3=42$). Thus, the level-3 chain code is $[[208,166,2]]$. Using our heuristic search combined with the yoking weight reduction in Sec.\,\ref{sec:weight_reduction}, we found a transformation of codes under which the outer code tolerates $4 d_{\rm inner}$-weight hook errors for outer checks. Thus, the choice of $d_{\rm hook}=4$ does not degrade the total fault distance of the outer code, even though $d_{\rm hook}=4$ is smaller than $d_{\rm outer}=8$. The final code structure is shown in Fig.~\ref{fig:the_multi_yoked_surface_code}.

\begin{table}[hb]
    \centering
    \begin{tabular}{|c|c|}
    \hline
    Symbol & Description \\
    \hline
    \hline
    $d_{\rm inner}$ & Code distance of level-0 surface codes \\
    \hline
    $n_0$ & Number of level-0 (surface-code) blocks per level-1 iceberg code $[[n_0,n_0-2,2]]$ \\
    \hline
    $n_1$ & Number of level-1 code blocks per level-2 chain code \\
    \hline
    $n_2$ & Number of level-2 code blocks per level-3 chain code \\
    \hline
    $u$ & Number of narrow workspaces per module \\
    \hline
    $s$ & Yoking time for wide workspace relative to that for narrow workspace \\
    \hline
    $v$ & Distance between two workspaces divided by $(s+1)$ \\
    \hline
    $d_{\rm hook}$ & Width of wide workspace divided by $d_{\rm inner}$ \\
    \hline
    $m$ & Number of module types in memory space \\
    \hline
    \end{tabular}
    \caption{Design parameters of multi-yoked surface codes}
    \label{tab:design_parameters}
\end{table}

\begin{figure}[h]
  \centering
  \includegraphics[width=0.95\textwidth, angle=0]{figs/the_multi_yoked_surface_code-eps-converted-to.pdf}
  \caption{The densest design of multi-yoked surface codes in this work.}
  \label{fig:the_multi_yoked_surface_code}
\end{figure}

\section{Numerical Results}
\label{sec:numerical_results}

\subsection{Evaluation Methods and Approximations} \label{sec:numerical_setting}
We have evaluated the performance of the proposed memory-space implementation. The full circuit is too large to simulate directly in Stim~\cite{gidney2021stim}, so we utilized the following evaluation methods and approximations. Hereafter, we denote the surface code distance by $d$.

The outer codes in the multi-yoked surface codes are not matchable, and we cannot handle the whole decoding task with fast MWPM decoders. Thus, the inner and outer codes are simulated separately. We first evaluate the logical error events of the idling circuits of the inner surface codes. Then, the outer code simulation is executed with gap values pre-sampled from them and decoded with Tesseract~\cite{beni2025tesseract}. This approximation simplifies decoding for the outer codes and is necessary to make the evaluation feasible with our computational resources.

In the evaluation of outer codes, we ignore the following errors in outer codes that are not expected to contribute significantly to the final logical error rates, similar to Ref.~\cite{gidney2025yoked}. 
The hook errors occurring in the outer-code checks are ignored because we allocate sufficient workspace width and the number of logical error events involving such errors is negligible. 
Since the time-like edges corresponding to the measurement errors did not contribute significantly to the logical error rates as discussed in Ref.~\cite{gidney2025yoked}, we assume perfect syndrome extraction for outer checks. 
Spacetime-like edges arising from the different schedules of the lattice surgery operations are ignored in the gap simulation. We expect this to have a negligible effect on the final logical error rates, since such error events are very rare. 

We utilized Monte Carlo sampling for evaluating logical error rates. As the evaluation for logical error rates in the teraquop regime requires massive sampling, we utilized several computational resources in parallel. We used a workstation with two Intel Xeon Platinum 8173M CPUs (56 cores, 112 threads, 2.0 GHz base, up to 3.5 GHz) with 256 GB memory, and 22 compute nodes of the RIKEN R-CCS Cloud: 14 nodes, each with an AMD EPYC 9684X CPU (96 cores, 192 threads, 2.55 GHz base, up to 3.7 GHz) and 768 GB memory; 2 nodes, each with two Intel Xeon Platinum 8470 CPUs (104 cores, 208 threads, 2.0 GHz base, up to 3.8 GHz) and 2,048 GB memory; 4 nodes, each with two AMD EPYC 9554 CPUs (128 cores, 256 threads, 3.1 GHz base, up to 3.75 GHz) and 1,536 GB memory; 1 node with two AMD EPYC 9534 CPUs (128 cores, 256 threads, 2.45 GHz base, up to 3.7 GHz) and 1,536 GB memory; and 1 node with an Intel Xeon w9-3575X CPU (44 cores, 88 threads, 2.2 GHz base, up to 4.8 GHz) and 512 GB memory.

\subsection{Evaluation of Inner Codes}
First, we sampled the surface code data. 
The syndrome extraction circuit for $12d$ rounds is constructed using the hex-grid scheduling of Ref.~\cite{hirai2026no} to avoid all hook errors in it. 
We applied the noise model SI1000~\cite{gidney2025yoked} with physical error rate $p=10^{-3}$. The memory circuit does not include errors in reset or measurement operations for data qubits, since multiple copies of this circuit are stacked as building blocks to construct the gap simulation circuit. 
We used Stim~\cite{gidney2021stim} for generating circuits and detector error models, and errors are decoded by PyMatching~\cite{higgott2022pymatching} with two-pass decoding~\cite{fowler2013optimal}. The gap values are then calculated on the decoding graph updated by the results of the first-pass decoding. 

Histograms of sampled gap values are shown in the left panel of Fig.~\ref{fig:gap_distribution} for several inner-code distances. Note that a positive gap value means that the decoder's estimate was correct, and a negative value means that it was incorrect. Similar to the results in Ref.~\cite{gidney2025yoked}, we see the distribution becomes smooth as the code distance increases. 

We then calibrated the gap for the outer decoder by fitting the log-likelihood ratio of an inner-decoder logical error as an affine function of the gap, $\Lambda_{\mathrm{aff}}(g) = \alpha g + \beta$, where $\alpha$ is the scale and $\beta$ is the offset, by minimizing the log-loss over the samples. Such a fit is dominated by the many low-gap samples, so we sorted the samples by gap, cut them into bins each containing a fixed number of observed or affine-predicted logical errors, and in each bin shifted the affine line by the shrunk log ratio of observed to predicted errors. The shifted points were monotonized and linearly interpolated, and the result was adopted only when it beat the affine fit on held-out log-loss. The results are shown in the right panel of Fig.~\ref{fig:gap_distribution}. This figure shows the logical error rates for observed unsigned gap values and their fit. We see that the gaps are well calibrated.

\begin{figure}[h]
  \centering
  \includegraphics[width=0.45\textwidth]{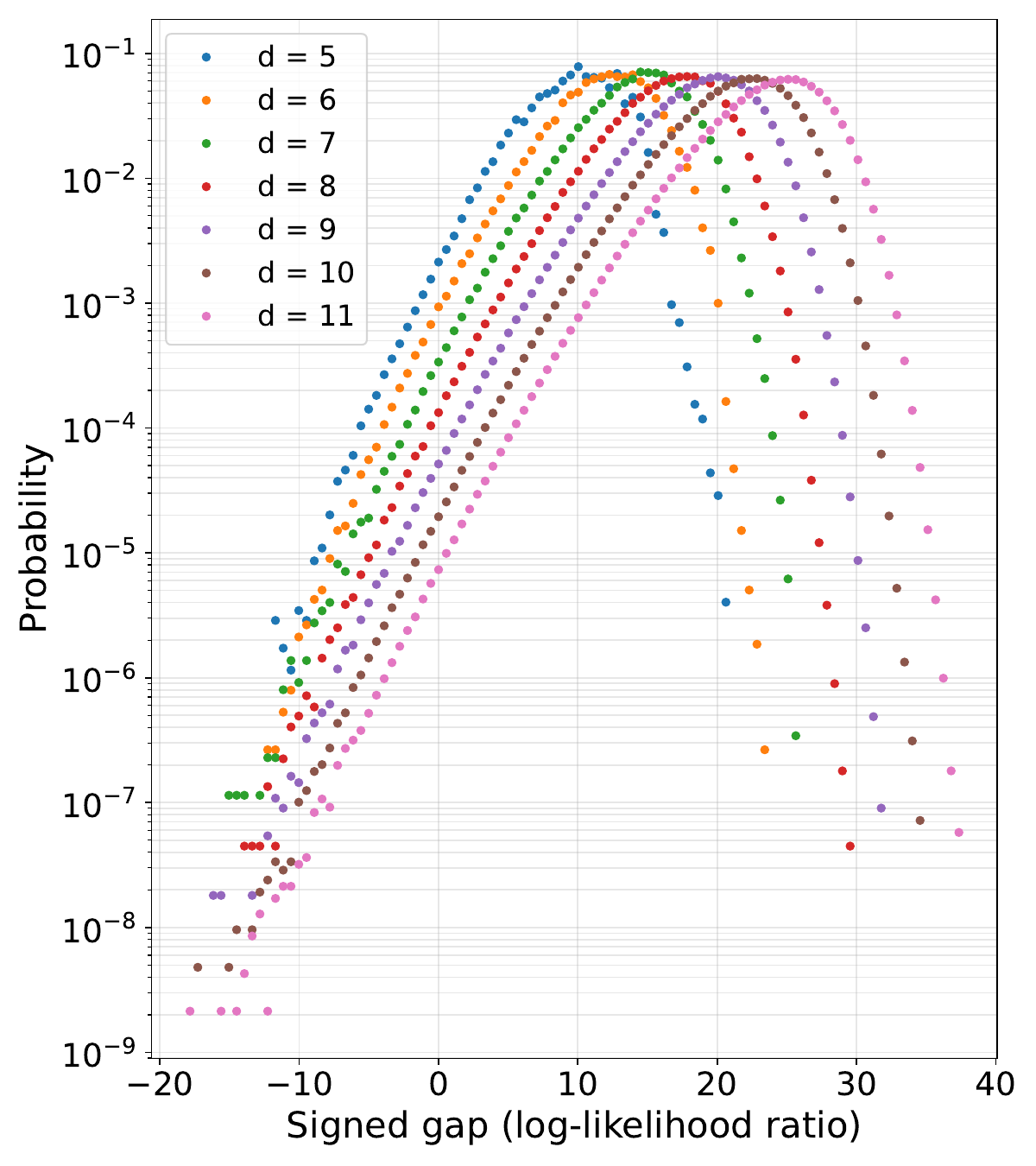}\hspace{40pt}
  \includegraphics[width=0.45\textwidth]{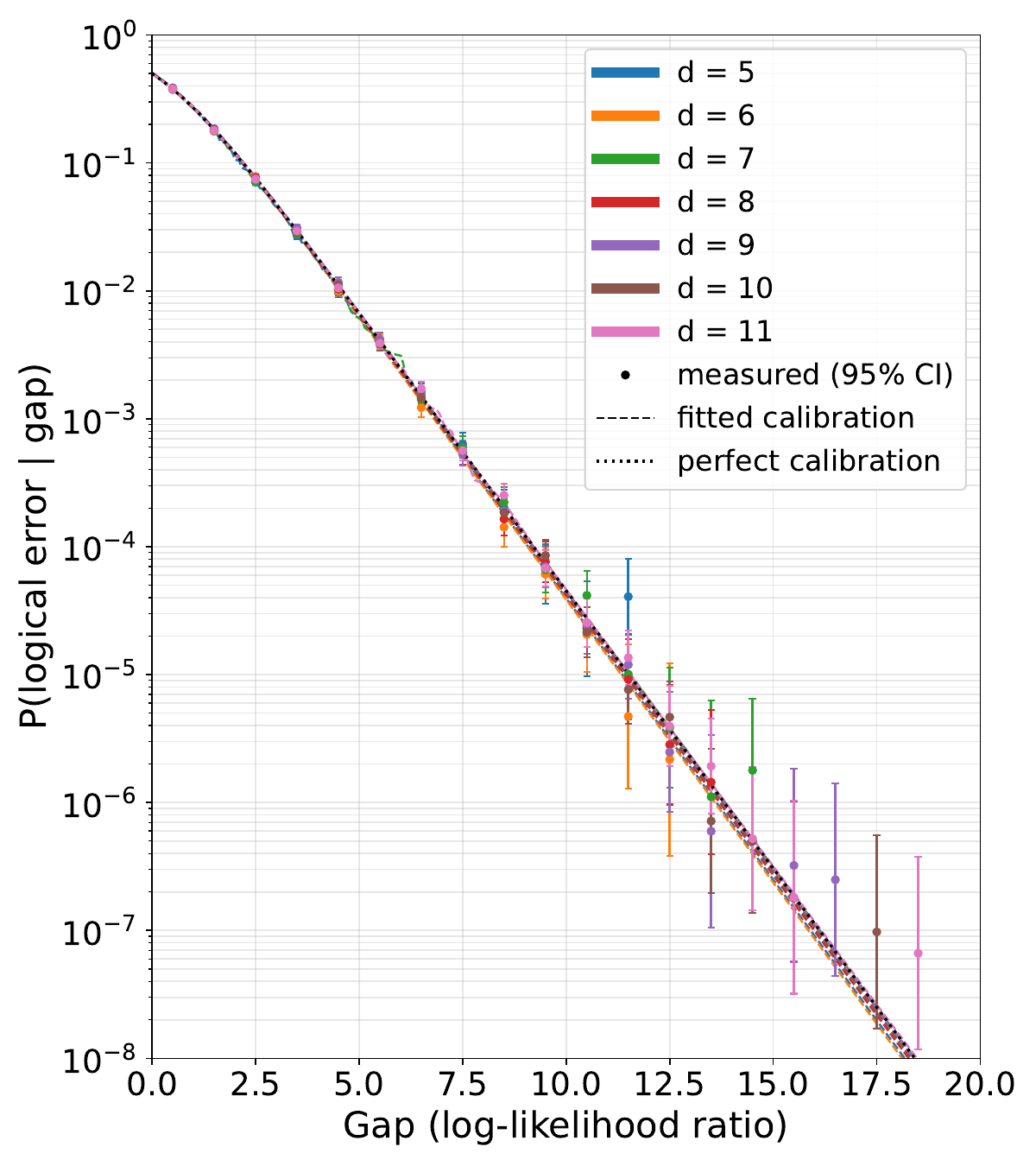}
  \caption{Left: distributions of the sampled gap values. Right: calibration of the gap values. The samples are taken from $12d$ rounds memory experiments, with approximately $10^4$ logical error events collected for each distance $d$.}
  \label{fig:gap_distribution}
\end{figure}

\subsection{Evaluation of Yoked Surface Codes}
Next, we evaluated the logical error rate of the yoked surface codes. We simulated two kinds of yoked surface codes, concatenated with the $[[8,6,2]]$ and $[[16,14,2]]$ codes, respectively, over 10 outer rounds. The yoking intervals are chosen from $12d$, $24d$, $48d$, $72d$, and $96d$. 
For error estimation, we employed the Tesseract decoder~\cite{beni2025tesseract}. We used the gap values pre-sampled in the previous subsection. Note that the outer code is decoded round by round, as we ignored time-like and spacetime-like errors in outer-code decoding as explained in Sec.\,\ref{sec:numerical_setting}. To improve decoding accuracy and throughput, we set its \texttt{det\_order} to the order in which the detectors are measured in the circuit. 

The evaluation results are shown in Fig.~\ref{fig:one_yoke_ler}. The dashed lines show the performance of inner codes for reference, and the solid lines show that of the outer codes. Each color corresponds to an interval value, and the legend lists the slopes for the inner and outer codes. We see that longer intervals significantly worsen the logical error rates of the outer codes. In contrast, the slope is almost independent of the interval values. The logical error rates are exponentially suppressed with the inner code distance $d$, and the slope of the outer codes is approximately double that of the inner code. When we increase the density from $[[8,6,2]]$ to $[[16,14,2]]$, the logical error rates increase. This indicates a trade-off between encoding rate and logical error rate.

\begin{figure}[h]
  \centering
  \includegraphics[width=0.45\textwidth]{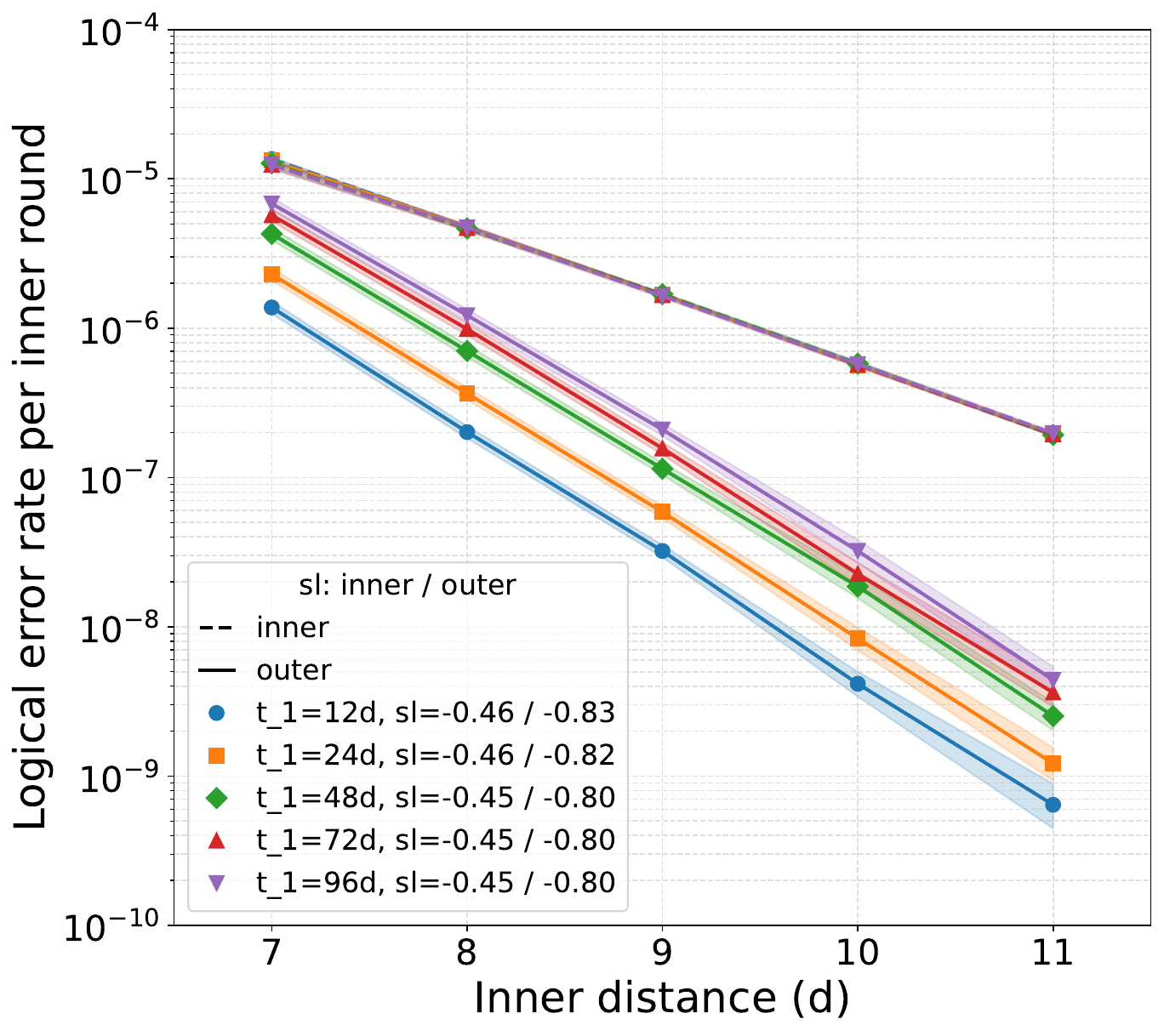}\hspace{30pt}
  \includegraphics[width=0.45\textwidth]{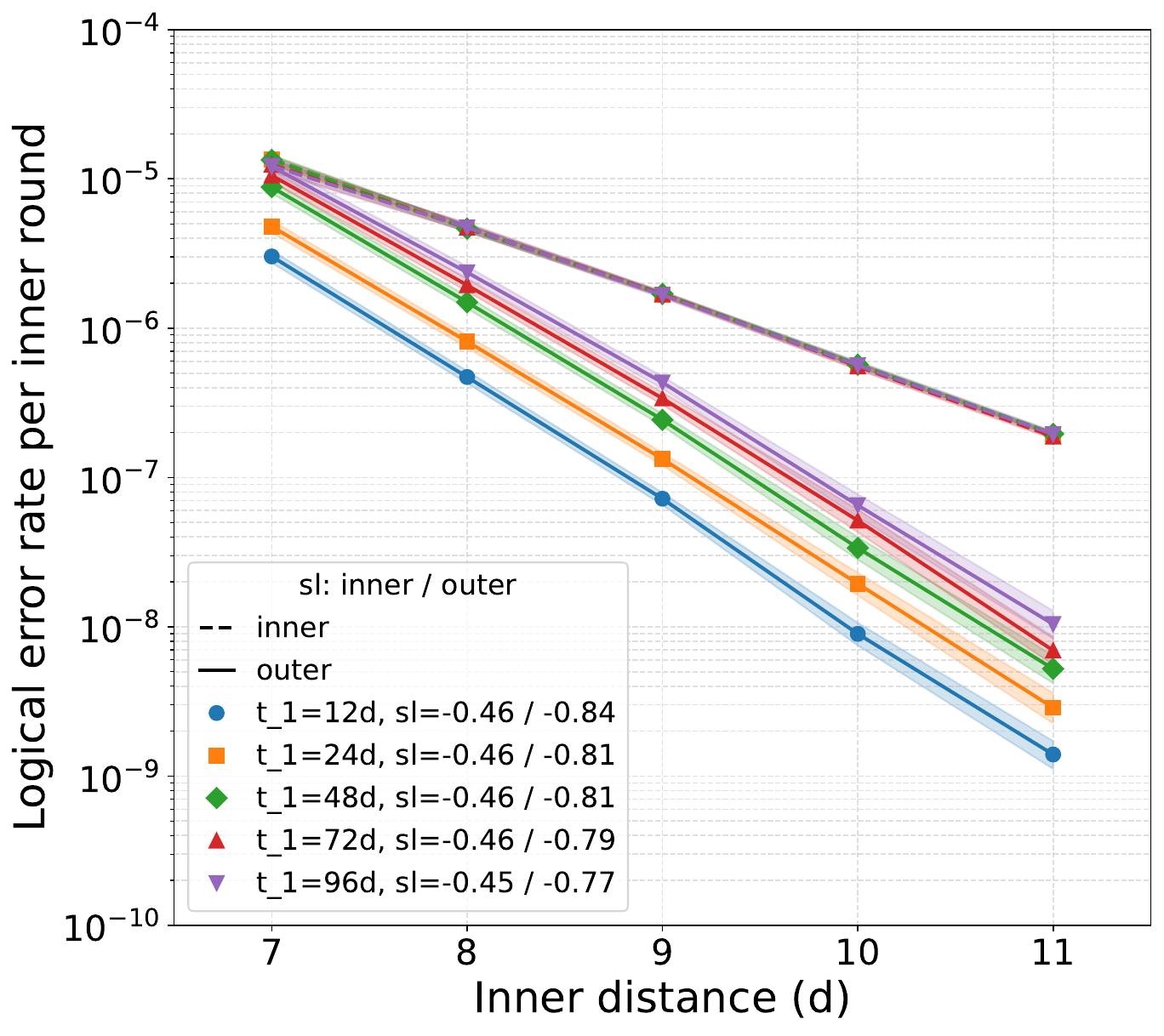}
  \caption{Logical error rates of the yoked surface code with the iceberg codes $[[8,6,2]]$ (left) and $[[16,14,2]]$ (right). \texttt{t\_1} is a yoking interval and \texttt{sl} is a slope. The logical error rates shown are per logical qubit.}
  \label{fig:one_yoke_ler}
\end{figure}

\subsection{Evaluation of Two-Yoked Surface Codes}
We evaluated the logical error rates of the two-yoked surface codes over 10 outer rounds. We again focused on two iceberg codes, $[[8,6,2]]$ and $[[16,14,2]]$, for the level-1 concatenation. To construct level-2 chain codes, we choose the seed codes described in Sec.\,\ref{sec:level2-code}. When we create chain codes from the seed codes, we repeat $r=6$ and $r=8$ iceberg-code blocks, respectively.

The logical error rates and fitted slopes are shown in Fig.~\ref{fig:two_yoke_ler}. The performance of chain codes with six $[[8,6,2]]$ iceberg codes is shown at the top, and that for eight $[[16,14,2]]$ iceberg codes is shown at the bottom. Let $t_1$ and $t_2$ denote the level-1 and level-2 yoking intervals. We swept the level-2 yoking interval $t_2$ over $144d$, $288d$, and $576d$, which corresponds to the column. In each figure, we swept $t_1$. We see that the logical error rates are suppressed exponentially with the inner code distance. The two-yoked surface codes suppress the logical error rates more strongly than the yoked surface codes do, with slopes approximately quadruple those of the inner code. 
We also see that the shorter the level-2 interval becomes, the more the logical error rate is improved by frequent level-1 yoking. We finally conclude that the higher-level interval can be made longer while keeping the logical error rate moderate.

\begin{figure}[t]
  \centering
  \includegraphics[width=0.30\textwidth]{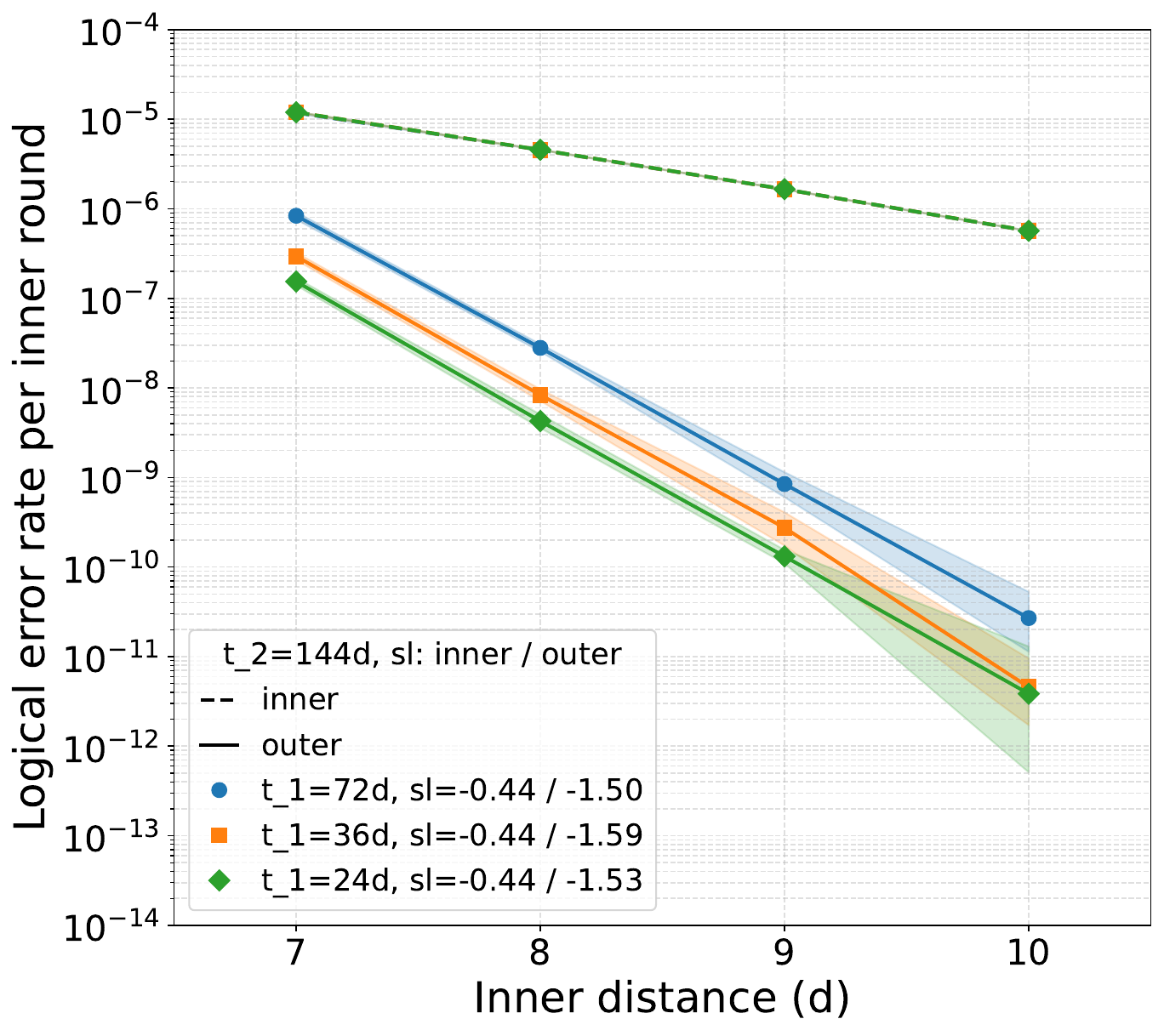}\hspace{10pt}
  \includegraphics[width=0.30\textwidth]{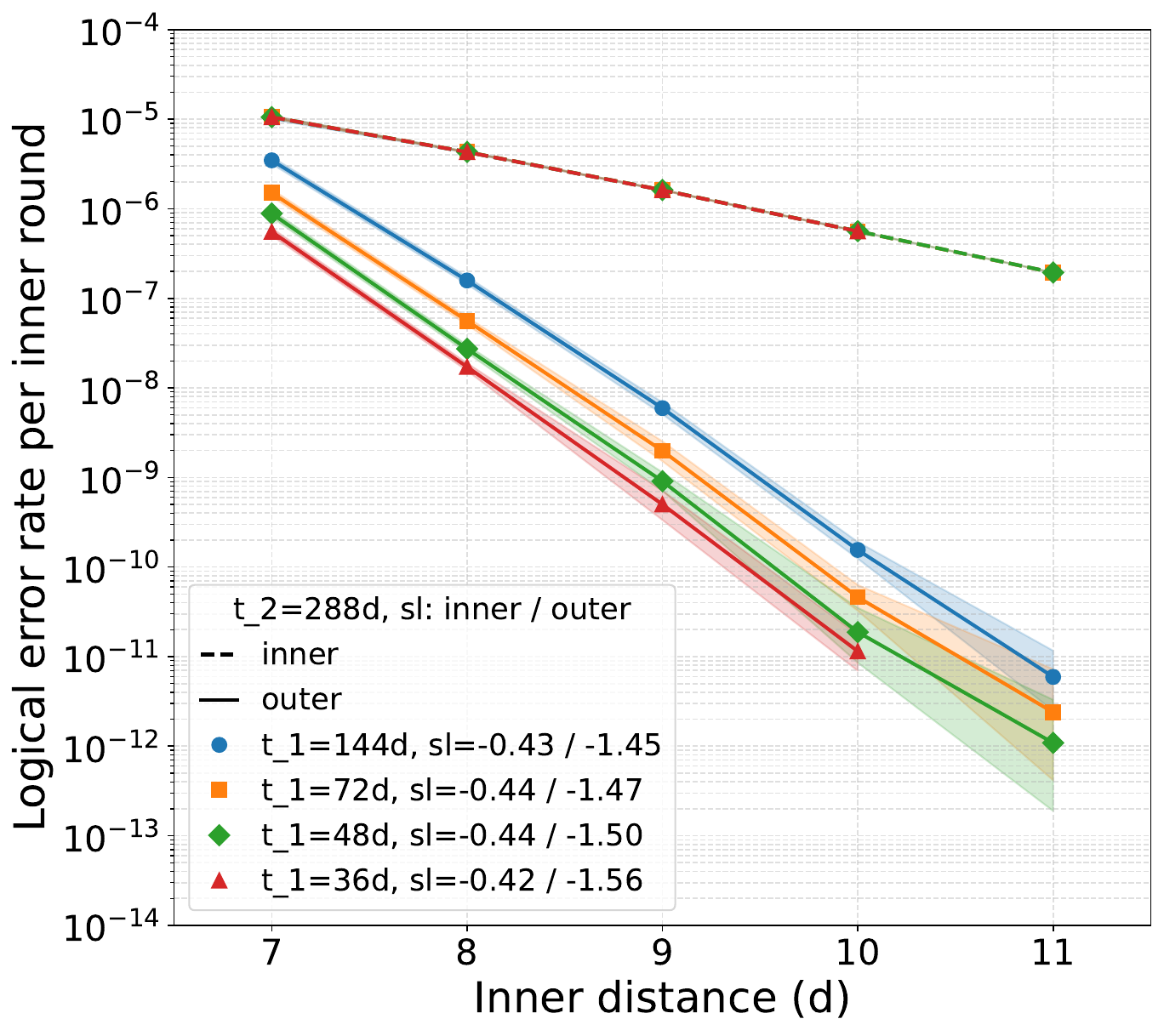}\hspace{10pt}
  \includegraphics[width=0.30\textwidth]{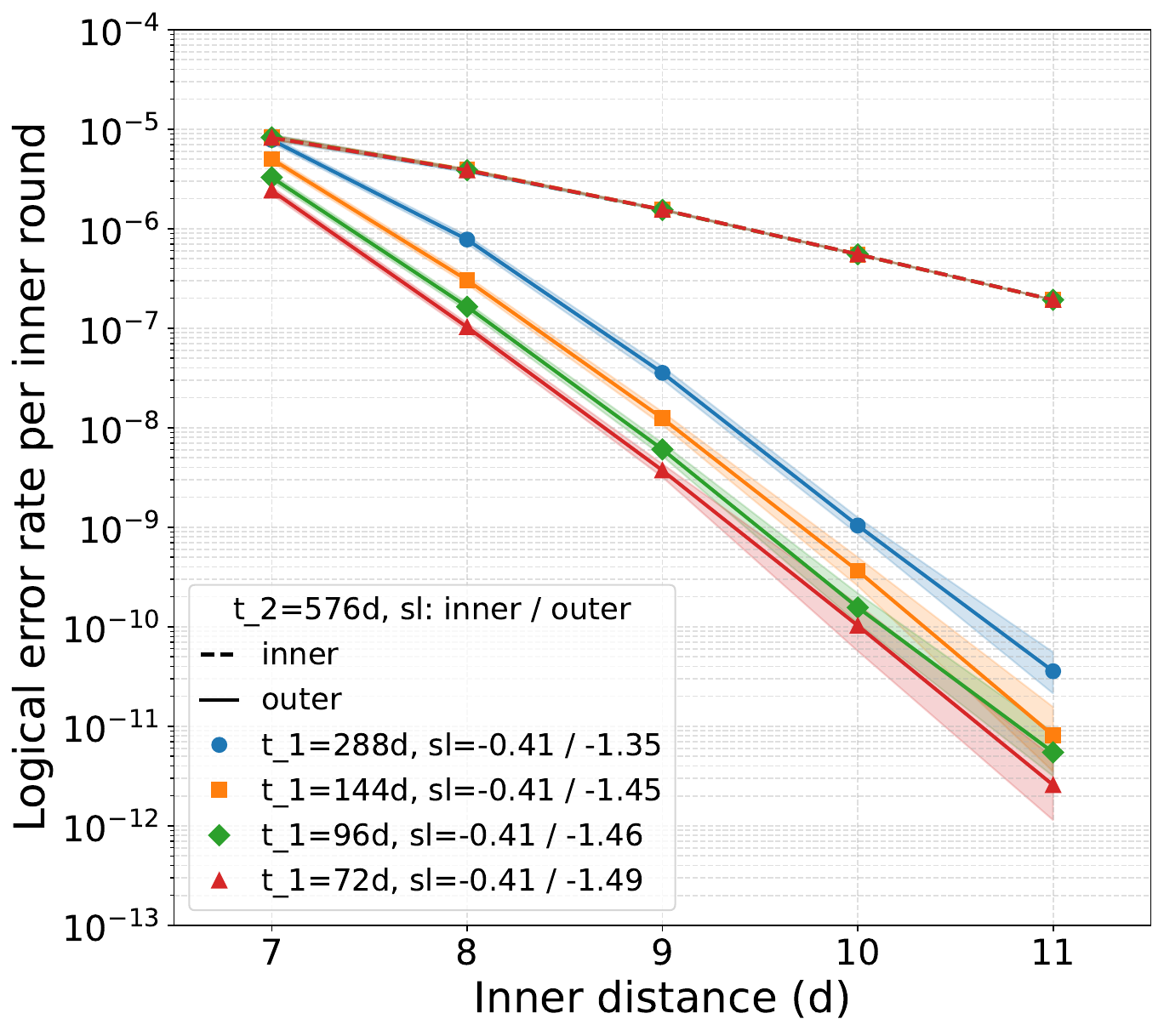}
  \includegraphics[width=0.30\textwidth]{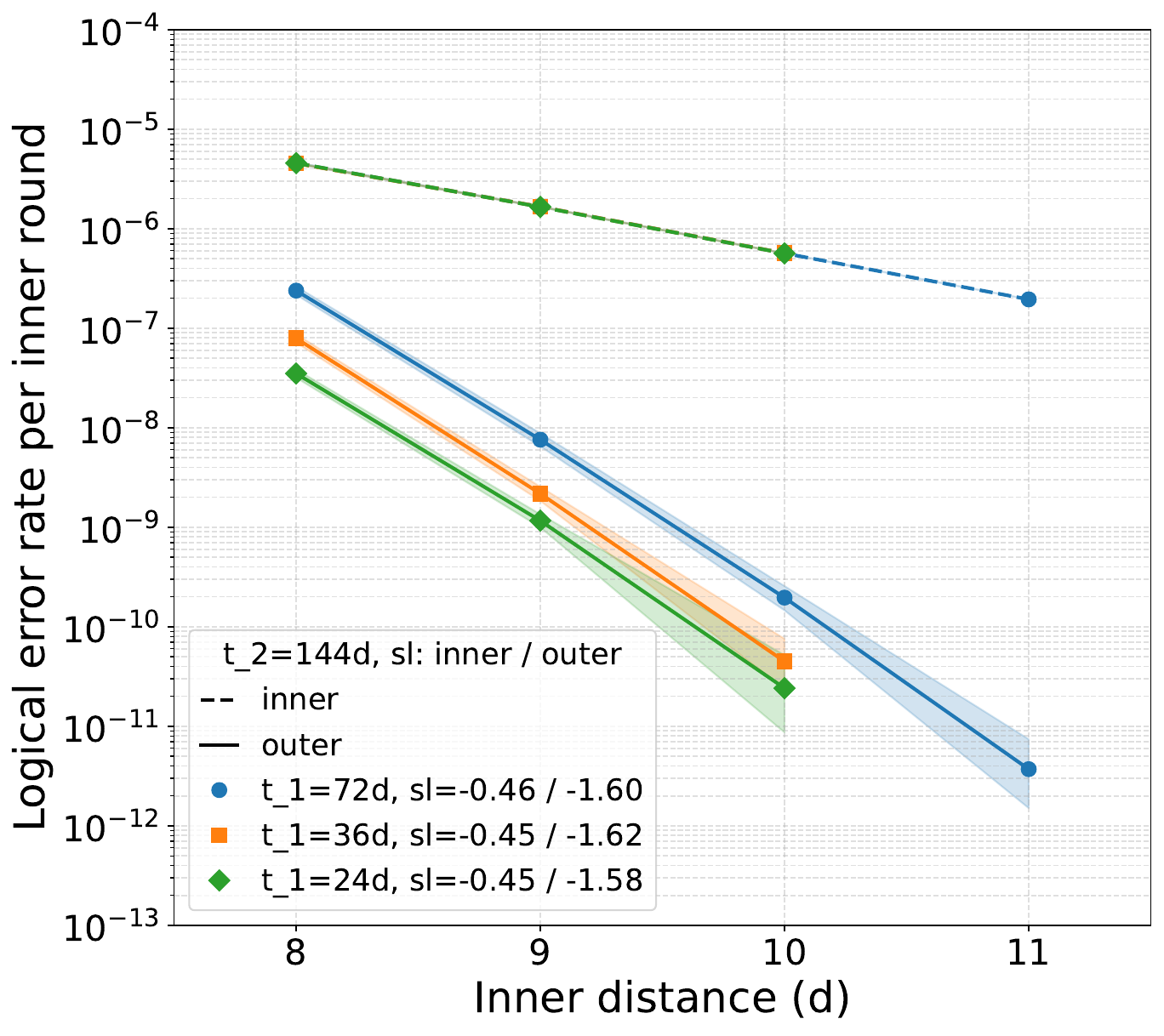}\hspace{10pt}
  \includegraphics[width=0.30\textwidth]{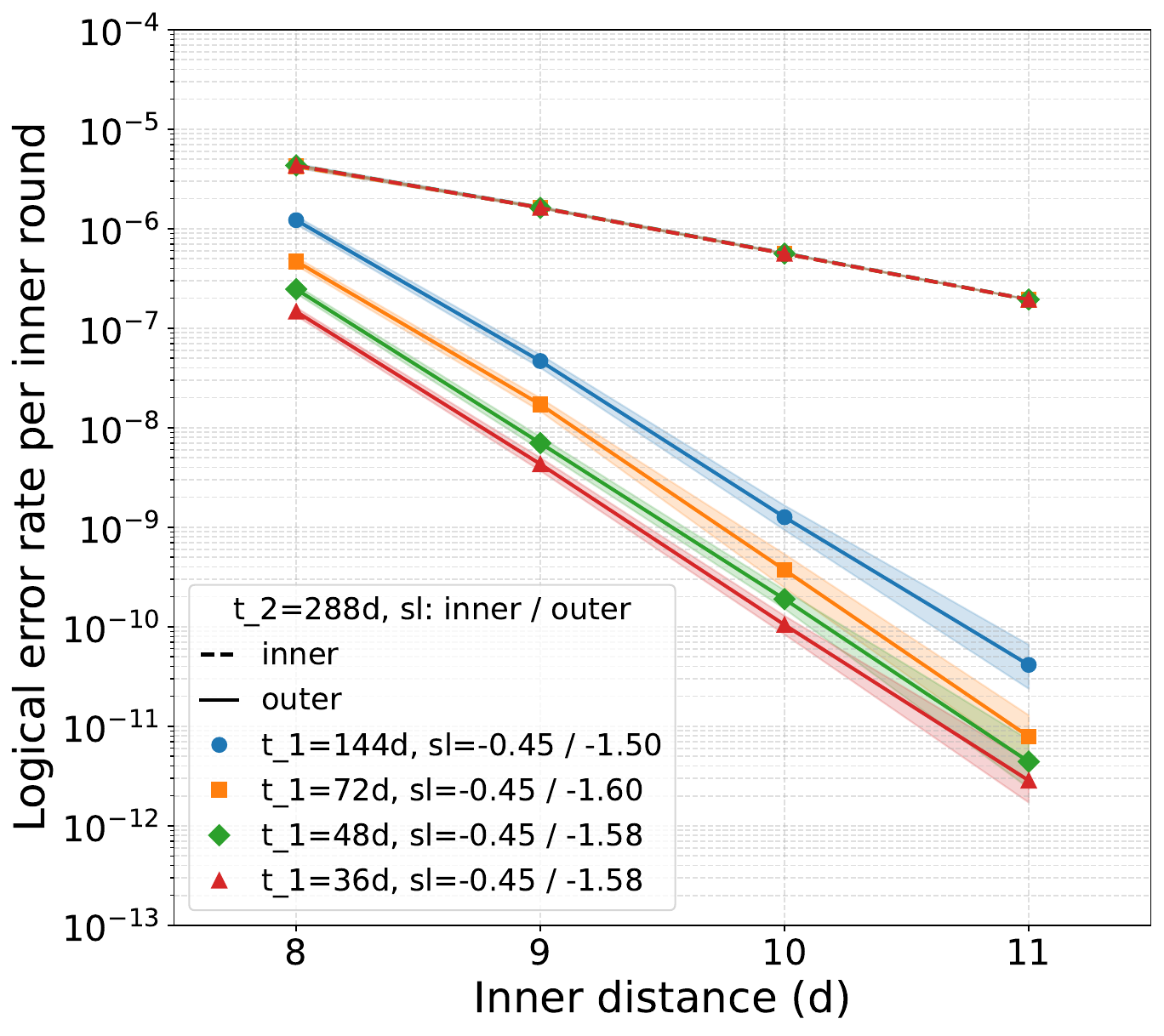}\hspace{10pt}
  \includegraphics[width=0.30\textwidth]{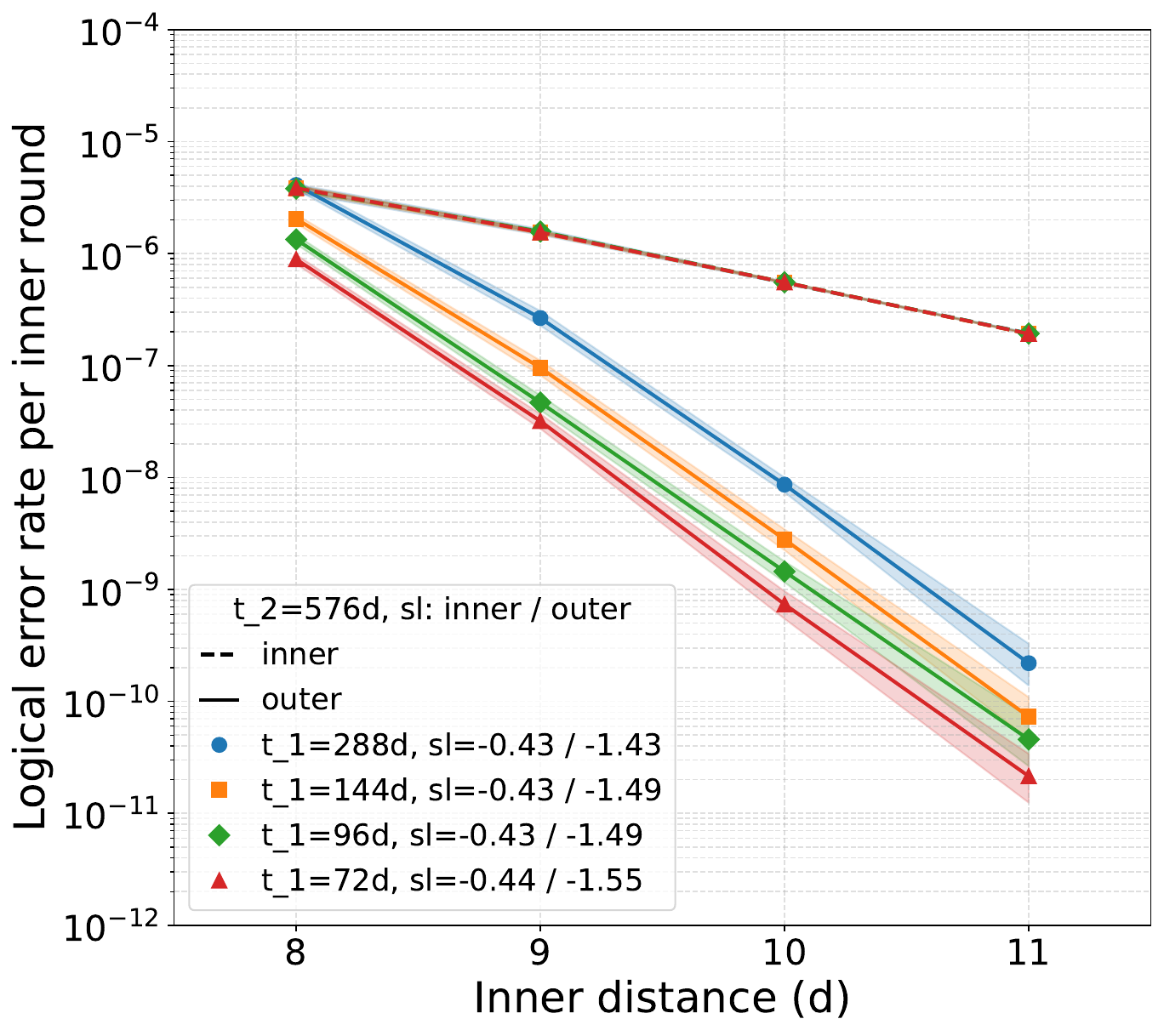}
  \caption{Logical error rates of the two-yoked surface codes for six iceberg codes $[[8,6,2]]$ (top) and for eight iceberg codes $[[16,14,2]]$ (bottom). \texttt{t\_1} is a level-1 yoking interval, \texttt{t\_2} is a level-2 yoking interval and \texttt{sl} is a slope. The logical error rates shown are per logical qubit.}
  \label{fig:two_yoke_ler}
\end{figure}

\subsection{Evaluation of Three-Yoked Surface Codes}
From the results so far, we design the three-yoked surface code, i.e., the densest code in this work, so that the level-2 interval is much longer than the level-1 interval and the level-3 interval is much longer than the level-2 interval. These code choices depend heavily on the surface-code layout explained in the previous section. 

The results of the three-yoked surface code are shown in the left panel of  Fig.~\ref{fig:surface_and_three_yoke_ler}. We simulated the code shown in Fig.~\ref{fig:the_multi_yoked_surface_code} over four outer rounds. We see that the three-yoked surface code also suppresses the logical error rates exponentially. However, its slope of $2.87$ falls short of the ideal value of $0.42\times 8\simeq 3.4$, i.e., eight times the inner-code slope expected from the distance-8 outer code. This may be because the simulated $d$ regime has not yet reached the regime where the slope is maximal. From this discussion, the data point at $d=10$ will most likely lie below the fitted line, but we nevertheless use the fitted line to estimate the density in the next paragraph. We also simulated the hex-grid surface code with the two-pass decoder (the right panel of Fig.~\ref{fig:surface_and_three_yoke_ler}) for the density comparison in the next subsection.

\begin{figure}[h]
  \centering
  \includegraphics[width=0.49\textwidth]{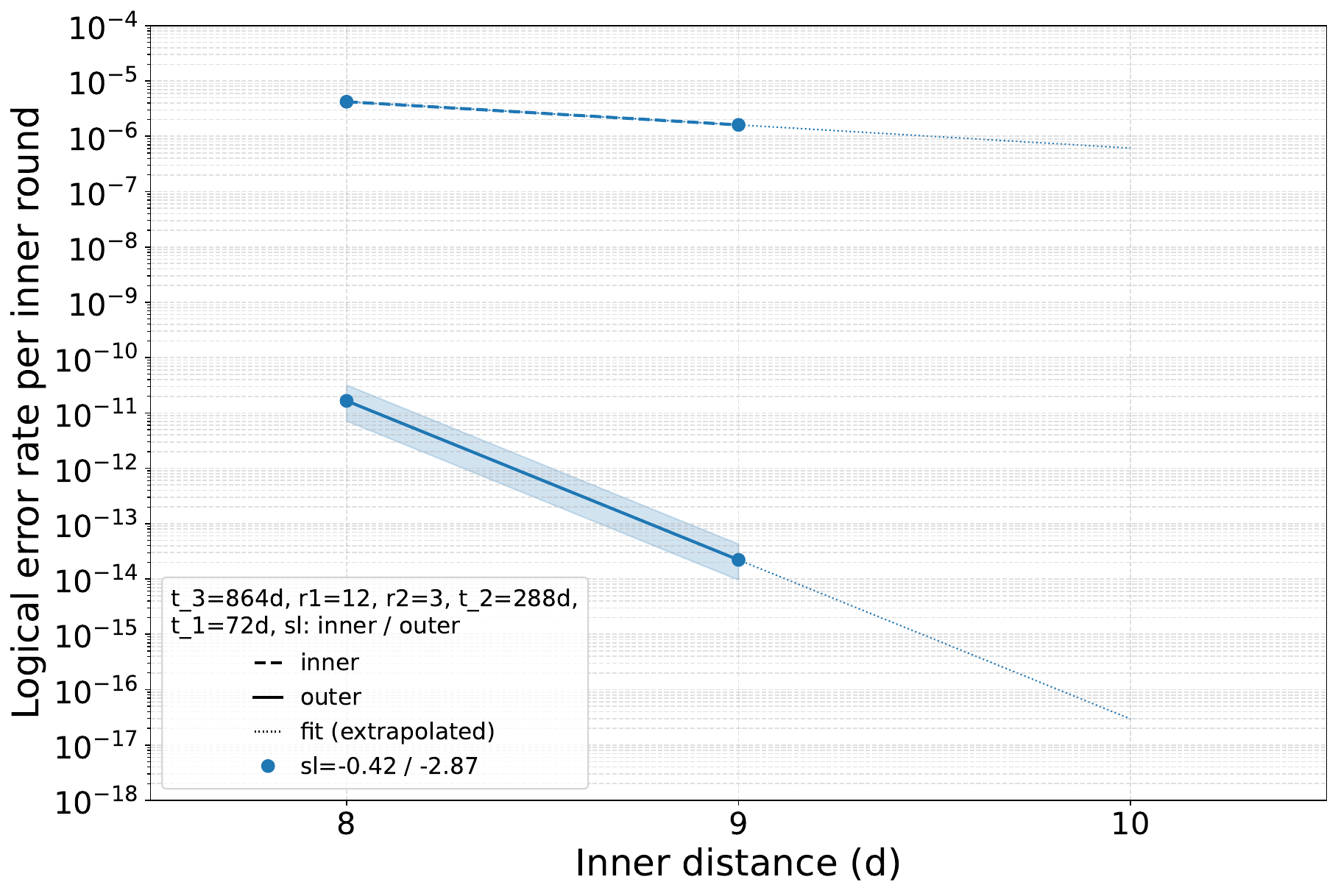}
  \includegraphics[width=0.49\textwidth]{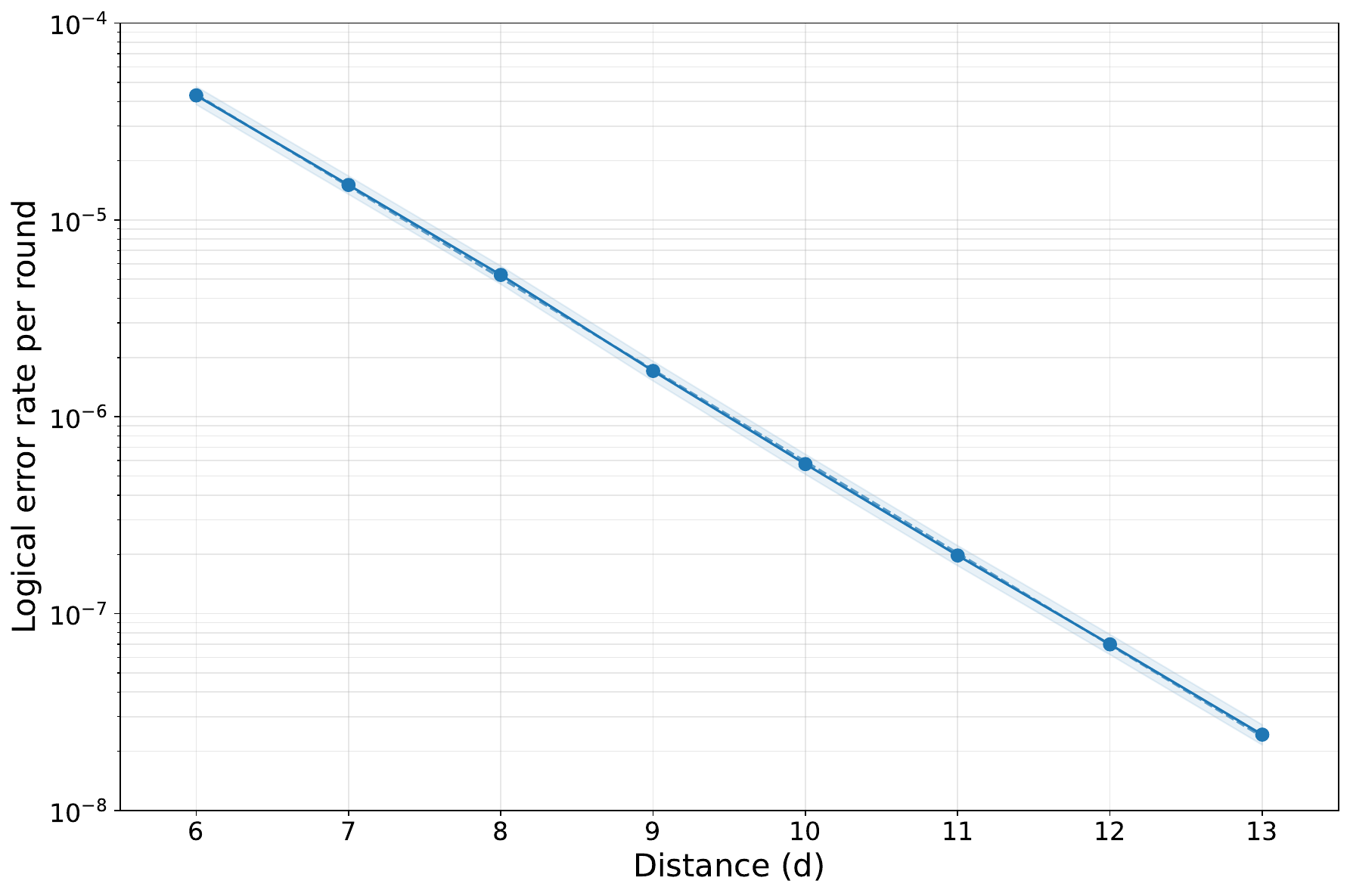}
  \caption{Left: logical error rates of the three-yoked surface code shown in Fig.~\ref{fig:the_multi_yoked_surface_code}. \texttt{t\_1} is a level-1 yoking interval, \texttt{t\_2} is a level-2 yoking interval, \texttt{r1} is a value of \texttt{t\_2/t\_1} and \texttt{sl} is a slope. Right: logical error rates of the surface code in the hex-grid implementation and decoded by the two-pass decoder. The dashed line is the fitting result. The logical error rates shown are per logical qubit.}
  \label{fig:surface_and_three_yoke_ler}
\end{figure}

\subsection{Density Evaluation}
We finally compare the number of physical qubits in each code, which is shown in Fig.~\ref{fig:num_phys_comparison}. 
In this evaluation, we use the fitted logical error rates $\log{p_\text{L}}= -0.4661d - 1.566$, $\log{p_\text{L}} = -0.794d +0.625$, $\log{p_\text{L}}= -1.5664d + 6.2812$, and $\log{p_\text{L}}= -2.8747d + 12.2164$ for the unyoked, one-yoked ($[[16,14,2]]$), two-yoked ($[[16,14,2]]\rightarrow[[112,104,2]]$), and three-yoked surface codes ($[[16,14,2]]\rightarrow[[112,104,2]]\rightarrow[[208,166,2]]$), respectively. 
The physical qubit count per logical qubit in the above evaluation includes the qubits used for workspaces, and we assumed the memory-space layout discussed in Sec.\,\ref{sec:overall} as shown in Fig.~\ref{fig:memory_comparison}. 

Before discussing this figure, we note two caveats about this evaluation. First, these memories serve as cold storage, and we do not include hallways for load and store operations between the memory and the computational space. If load and store functionality were required, the density would decrease slightly from the shown value. We nevertheless omit them to focus on the achievable density of storage. We later discuss candidate designs for data extraction in Sec.\,\ref{sec:discussion}. Second, we used the same layouts for two-yoked and three-yoked surface codes. As our patch layouts are optimized for the three-yoked surface code, the performance of the two-yoked surface code is likely underestimated. 

As can be seen from the figure, the three-yoked surface code significantly reduces the number of physical qubits. Suppose that a teraquop regime requires a logical error rate per inner round of less than about $10^{-15}$. Then, our three-yoked surface code is about 4 to 4.5$\times$ denser than the unyoked one in the teraquop regime.

\begin{figure}[t]
  \centering
  \includegraphics[width=0.49\textwidth]{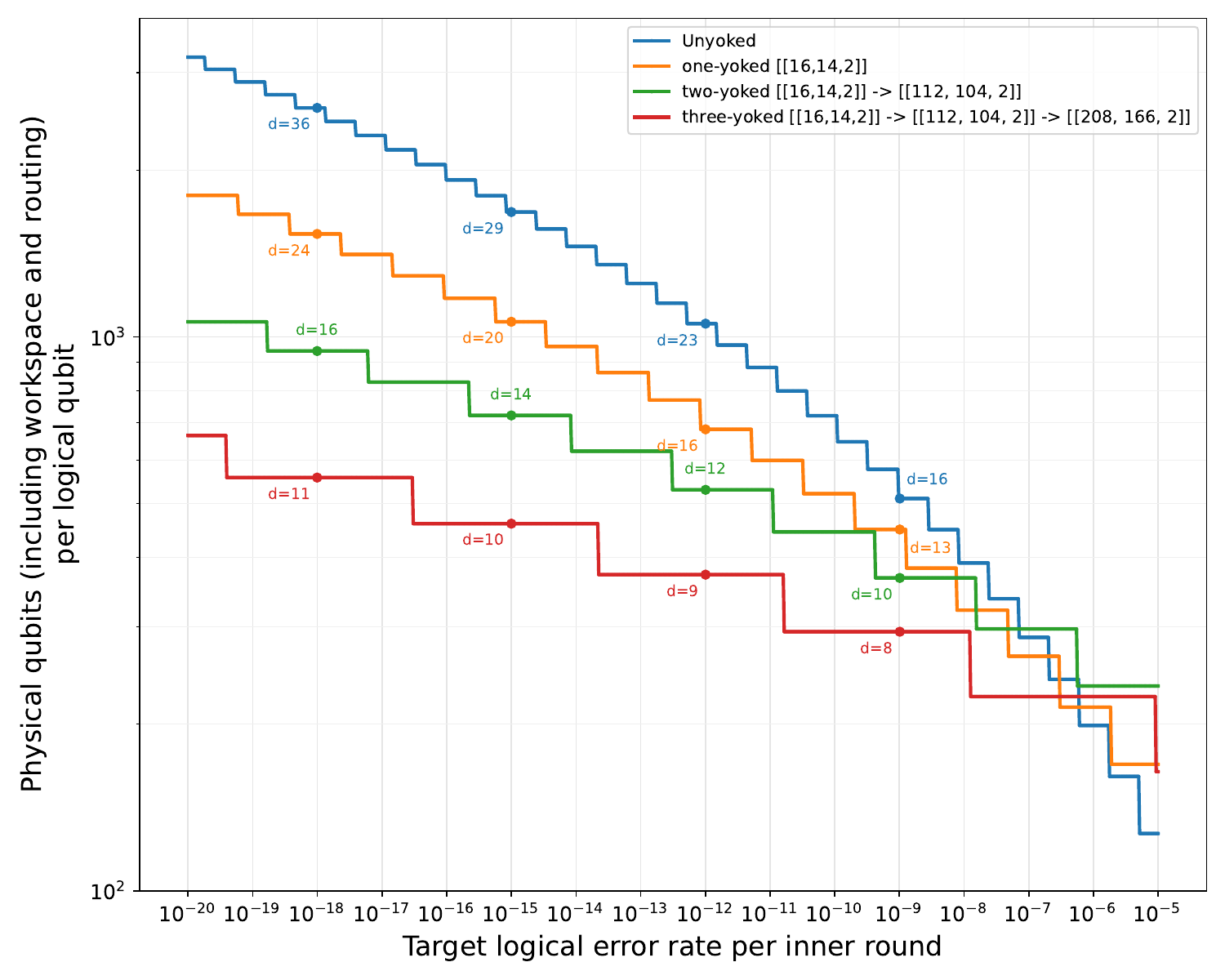}
  \includegraphics[width=0.49\textwidth]{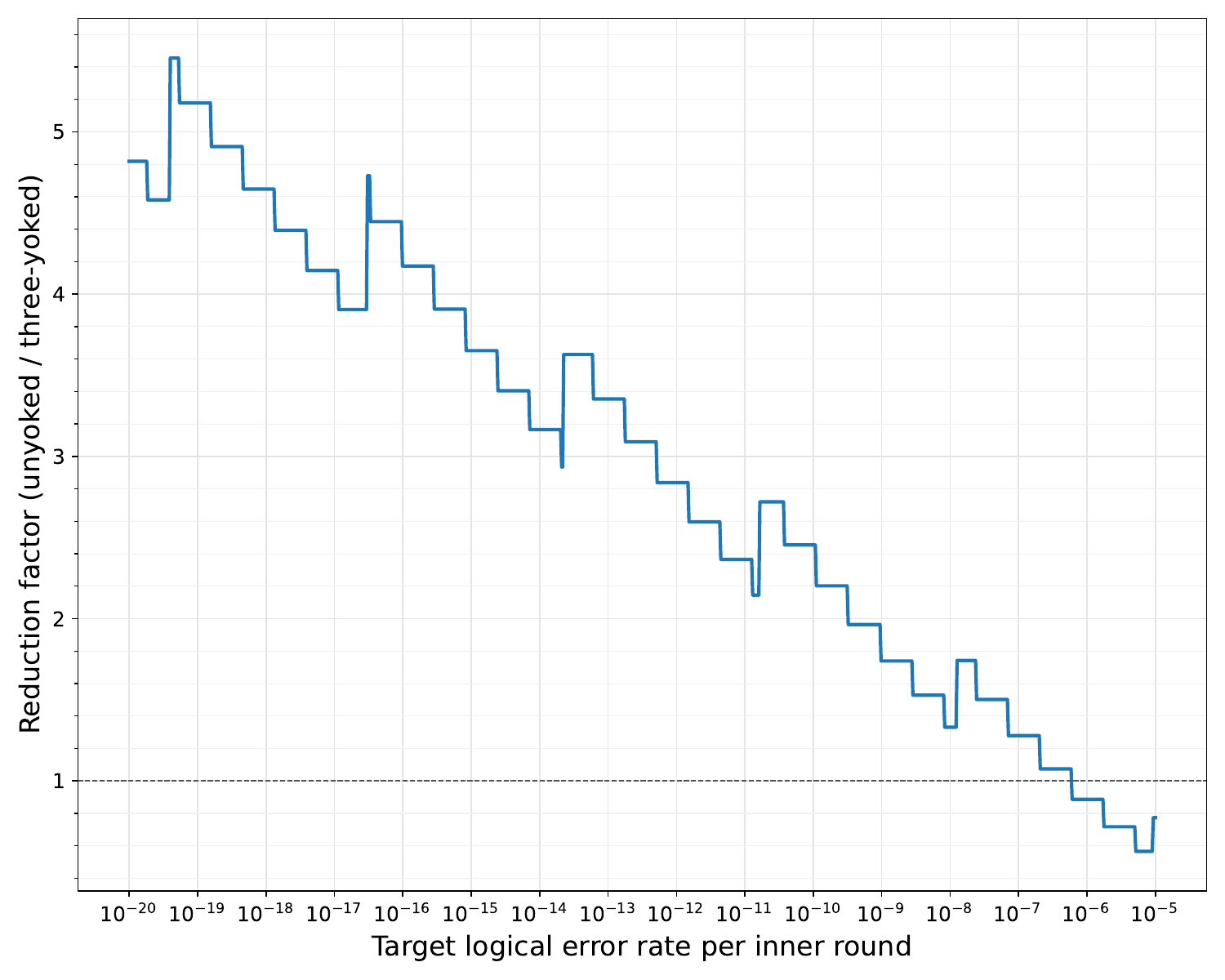}
  \caption{Left: the physical qubits comparison. Right: the reduction factor of the physical qubits of the three-yoked surface code from the surface code.}
  \label{fig:num_phys_comparison}
\end{figure}

\begin{figure}[t]
  \centering
  \includegraphics[width=0.90\textwidth]{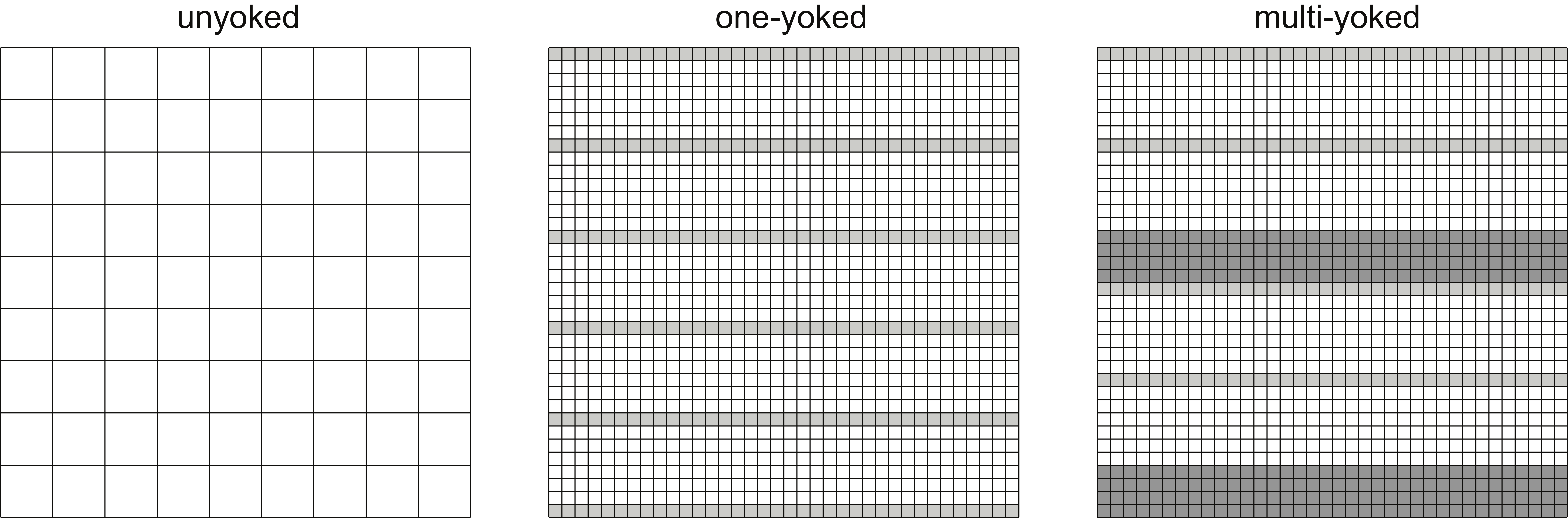}
  \caption{From left to right: surface-code layouts in the unyoked, one-yoked, and multi-yoked surface codes. The white patches hold the logical information. These are cold-storage memories that are not directly accessible, so all workspaces in the one-yoked and multi-yoked surface code memory are fully dedicated to yoking.}
  \label{fig:memory_comparison}
\end{figure}

\section{Discussion}
\label{sec:discussion}
In this work, we generalized yoked surface codes to a multi-level concatenated version, which we call multi-yoked surface codes. We constructed a new code family, called chain codes, to obtain the high-density parity-check codes that fit our multi-yoked construction. The chain codes have the favorable property that they can be concatenated over a code while managing correlated errors inside each block. Therefore, they enable a high-density encoding even with only a few concatenation levels. We also solved the long-interval problem, which arises when a single large-distance code is used as the outer code, by introducing a hierarchy among the yoking intervals at each level. By making the higher-level yoking intervals longer, we can keep lower-level yoking frequent. Under the SI1000 noise model, we confirmed that our three-yoked surface code shown in Fig.~\ref{fig:the_multi_yoked_surface_code} is about 4 to 4.5$\times$ denser than the unyoked surface code in the teraquop regime. This achieves a higher density than the original yoked surface code.

There are several future directions to further improve the idea of multi-yoked surface codes. We list our current open questions and future perspectives as follows.

\textbf{Can we further improve the density by higher-level concatenation?} 
This work considered up to three-yoked surface codes because simulating higher-level concatenation requires significant classical computational resources. With more computational resources, we might see that the density of multi-yoked surface codes improves further.

\textbf{Can we efficiently evaluate the performance of high-density and teraquop-regime memory?}
Simulating the entire memory is a major bottleneck in designing memory spaces for the teraquop regime, yet these designs should be completed before building the actual quantum memory. Exploring more sophisticated and accurate methods to evaluate logical-error-rate behavior in the teraquop regime is a vital topic.

\textbf{Can the outer decoder really keep up with the outer syndrome extraction?} 
A major drawback of using multi-yoked surface codes is that the outer code becomes non-matchable, so we cannot use fast MWPM decoders such as PyMatching. We expect this problem can be mitigated by employing two-stage decoders with carefully optimized parameters, but this goes beyond the scope of this paper and is left as future work.

\textbf{How should we design the timing of data load/store in accordance with both the yoking and the computation?}
This paper focuses on memory operations for the multi-yoked surface code and does not account for the workspace needed for load and store operations between the memory and hot storage or the computational space. Although data can be transferred to and from external code blocks via state teleportation using lattice surgery, the hallway design for this lattice surgery is not obvious, for the following reasons.
Supposing the required fault distance is $d_{\rm outer} d_\text{inner}$, the workspace for lattice surgery should have $d_{\rm outer} d_\text{inner}$ width and $d_{\rm outer} d_\text{inner}$ rounds of syndrome extraction. Therefore, unlike yoking operations, which tolerate time-like errors during lattice surgery, this operation cannot easily use a space-efficient implementation to reduce the workspace width. This problem can be resolved by separating a single lattice surgery with $d_{\rm outer} d_\text{inner}$ rounds into several lattice surgeries with fewer rounds, and by employing a space-efficient implementation for each lattice surgery. Errors in the operation are estimated using the soft information from the inner decoder. An example design for extracting data while preserving $d_{\rm outer}=4$ by splitting the extraction into two lattice surgeries is shown in Fig.~\ref{fig:data_extracting}. Optimizing the schedule of these separated lattice surgeries would be necessary to combine our ideas with computational space.

\textbf{How should such a memory be connected to the hotter memories without incurring significant latency?}
Compared to the original yoked surface code, our multi-yoked surface codes provide colder memory, i.e., higher-density memory with longer access latency. This does not conflict with yoked surface codes; instead, we envision using them complementarily to form a continuous memory hierarchy. Exploring efficient memory-hierarchy designs and latency-hiding techniques would be a promising direction, given progress in modern computer architectures.

\textbf{Is there an efficient procedure for constructing chain codes for a general level of concatenation?}
Several of our protocols, such as finding seed codes for given problematic errors and optimizing the weight of given codes with yoking-weight reduction, rely on heuristic methods. Because we focused on concatenations of small high-level codes, we found solutions that we believe are near-optimal. However, it remains open whether better search algorithms exist. This exploration might improve the construction and implementation of the multi-yoked surface codes.

\section*{Contributions}
YH proposed the original ideas, improved them, conducted the numerical calculations, and prepared the manuscript. YS supervised the project. All authors reviewed, revised, and approved the final manuscript. 

\section*{Code availability}
The source code used for the simulations in this work is available on GitHub at \url{https://github.com/yugahirai/multi_yoked_surface_codes}.

\section*{Use of Generative AI}
During preparation of this manuscript, the authors used Claude Fable 5.1 (Anthropic) to assist in language editing and developing heuristic optimization codes for the yoking weight reduction in Sec.\,\ref{sec:weight_reduction}. The authors independently verified the correctness of the numerical results in that section. All generated or revised text was reviewed and edited by the authors, who take full responsibility for the content of the manuscript.

\section*{Acknowledgments}
We thank Shota Ikari and Yosuke Ueno for their comments on this work.
This work is supported by MEXT Q-LEAP Grant No.~JPMXS0120319794 and JPMXS0118068682, MEXT Feasibility Study on the future HPCI, JST Moonshot R\&D Grant No.~JPMJMS2061, JST CREST Grant No.~JPMJCR23I4, JPMJCR24I4, and JPMJCR25I4.

\bibliography{ref}

\begin{figure}[h]
  \centering
  \includegraphics[width=0.95\textwidth]{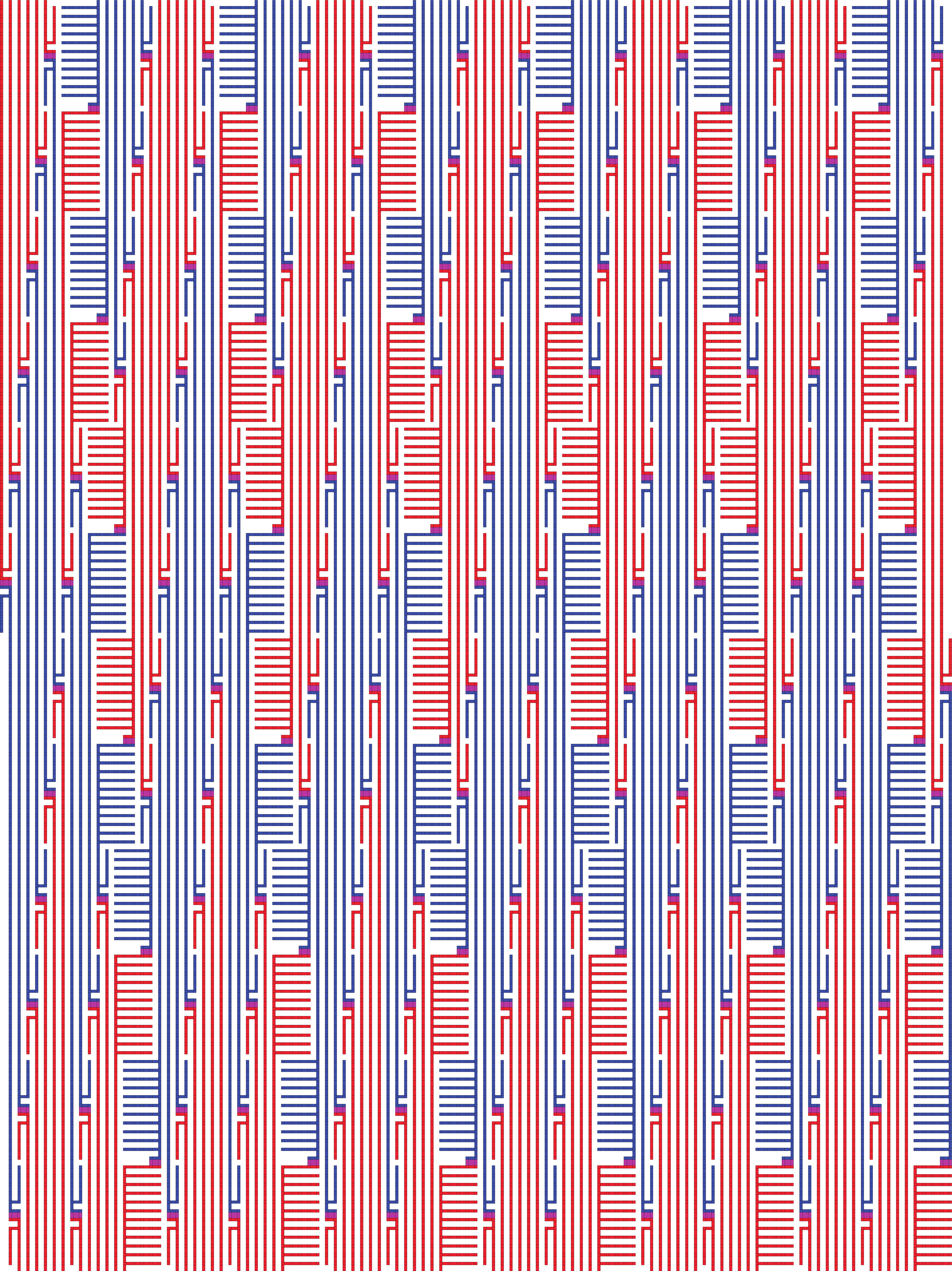}
  \caption{Full yoking schedule of the three-yoked surface code, shown as a defect diagram of the memory in Fig.~\ref{fig:patch_alloc_employed} viewed from the side. The narrow workspaces perform the level-1 checks, and the wide workspaces perform the level-2 and level-3 checks.}
  \label{fig:meanders}
\end{figure}

\begin{figure}[h]
  \centering
  \includegraphics[width=0.4\textwidth]{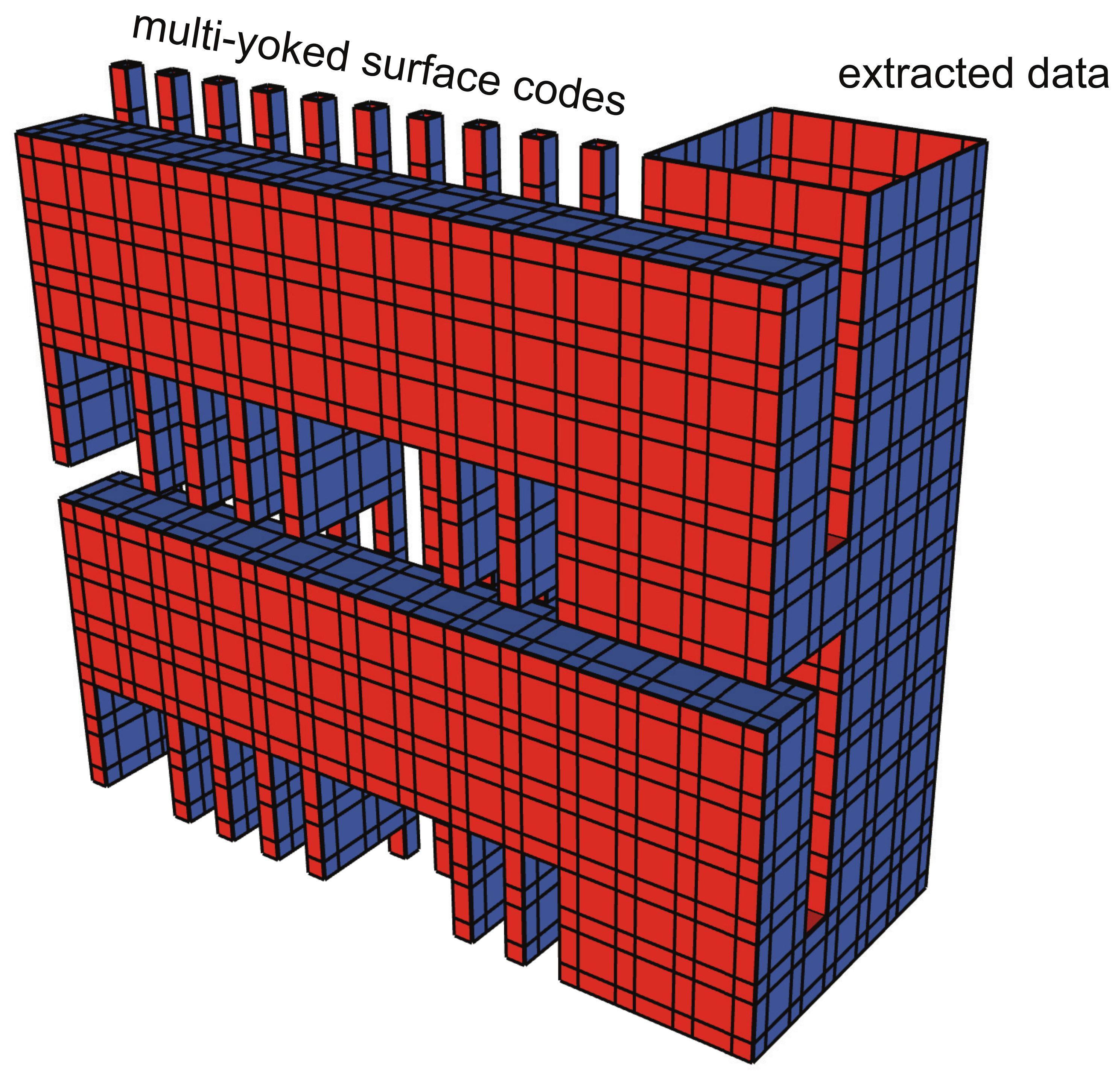}
  \caption{Data-extraction circuit with a width-$2d_\text{inner}$ workspace achieving fault distance $4d_\text{inner}$. The data in the memory are measured in the X basis after this step. Weight-$2d_\text{inner}$ errors can be detected by XORing the measured values of the two lattice surgeries between the multi-yoked and unyoked surface codes, and corrected using the soft information from the surface code decoder.}
  \label{fig:data_extracting}
\end{figure}

\end{document}